\newtheorem{theorem}{Theorem}[section]
\newtheorem{corollary}[theorem]{Corollary}
\newtheorem{lemma}[theorem]{Lemma}
\theoremstyle{definition}
\theoremstyle{remark}
\newtheorem{remark}{Remark}
\numberwithin{equation}{section}
\newcommand{\ZZ}{\mathbb{Z}} 
\newcommand{\RR}{\mathbb{R}} 
\newcommand{\Circle}{\mathbb{S}^{1}} 
\newcommand{\g}{\mathfrak{g}} 
\newcommand{\id}{\mathrm{id}}
\newcommand{\oed}{\mathcal{O}(\epsilon^{2},\epsilon\delta^{2})}
\newcommand{\DiffS}{\mathrm{Diff}^{\infty}(\mathbb{S}^{1})} 
\newcommand{\CS}{\mathrm{C}^{\infty}(\mathbb{S}^{1})} 
\newcommand{\GS}{C^{\infty}G} 
\newcommand{\gs}{C^{\infty}\mathfrak{g}} 
\newcommand{\D}[1]{\mathcal{D}^{#1}(\mathbb{S}^{1})}
\newcommand{\G}[1]{H^{#1}G}
\newcommand{\HH}[1]{H^{#1}(\mathbb{S}^{1})}
\newcommand{\norm}[1]{\left\Vert#1\right\Vert}
\newcommand{\abs}[1]{\left\vert#1\right\vert}
\DeclareMathOperator{\ad}{ad} %
\begin{document}

\title[Two-component equations modelling vorticity]{Two-component equations modelling water waves with constant vorticity}

\author{Joachim Escher}
\address{Institute for Applied Mathematics, University of Hanover, D-30167 Hanover, Germany}
\email{escher@ifam.uni-hannover.de}

\author{David Henry}
\address{Department of Applied Mathematics, University College Cork, Western Road, Cork, Ireland}
\email{d.henry@ucc.ie}

\author{Boris Kolev}
\address{Aix Marseille Universit\'{e}, CNRS, Centrale Marseille, I2M, UMR 7373, 13453 Marseille, France}
\email{boris.kolev@math.cnrs.fr}

\author{Tony Lyons}
\address{School of Mathematical Sciences, Dublin Institute of Technology, Kevin Street, Dublin 8, Ireland}
\email{tony.lyons@mydit.ie}


\keywords{Water waves; vorticity; model equations; Euler equation; diffeomorphism group.}
\subjclass[2010]{35Q35; 76B15;   35Q53; 58D05.}%

\date{\today}%

\begin{abstract}
In this paper we derive a two-component system of nonlinear equations which models two-dimensional shallow water waves with constant vorticity.  Then we prove well-posedness of this equation using a geometrical framework which allows us to recast this equation as a geodesic flow on an infinite dimensional manifold. Finally, we provide a criteria for global existence.
\end{abstract}
\maketitle

\tableofcontents

\section{Introduction}
\label{sec:introduction}

The focus of this paper is the following nonlinear two-component system of equations which model two-dimensional shallow water waves with constant vorticity:
\begin{equation}\label{eq:Main}
  \left\{
    \begin{split}
      m_{t} & = \alpha u_{x} - au_{x}m - um_{x} - \kappa\rho\rho_{x}, \\
      \rho_{t} & = - u\rho_{x} - (a-1)u_{x}\rho,
    \end{split}
  \right.
\end{equation}
where $m=u-u_{xx}$. Here $a\neq 1$ is a real parameter, $\alpha$ is a constant which represents the vorticity of the underlying flow, and $\kappa>0$ is an arbitrary real parameter. In Sections \ref{sec:derivation} and \ref{sec:system} we present a derivation of system \eqref{eq:Main} in the hydrodynamical setting using formal asymptotic expansions and perturbation theory \cite{Joh2005} applied to the full governing equations for two-dimensional water waves with constant vorticity. The above system generalises and incorporates a number of celebrated nonlinear partial differential equations which have recently been derived as approximate models in hydrodynamics.

When $\alpha=0$ (which in our considerations corresponds to irrotational fluid flow) and $\rho\equiv 0$ we obtain a one-component family of equations which are parameterised by $a\neq 1$. This family of so-called $b-$equations possess a number of structural phenomena which are shared by solutions of the family of equations \cite{EY2008,Hen2008,Hen2009b}. However,  there are just two members of this family which are integrable \cite{Iva2007}: the Camassa-Holm  (CH) \cite{CHH1994,CH1993} equation, when $a=2$, and the Degasperis-Procesi (DP) \cite{DP1999} equation, when $a=3$. The CH equation is a remarkable equation which is both integrable and which possesses both global solutions and solutions which exhibit wave-breaking in finite time \cite{Con2000,CE1998b,CE1998}, a feature which is also shared by the DP equation \cite{ELY2007,ELY2006}. Both the CH and DP equations can be derived in the hydrodynamic setting of shallow water waves \cite{Con2011,CL2009,Joh2002}, and hence they represent the first examples of integrable equations which possess wave-breaking solutions \cite{Con2011}.

In \cite{OR1996} the authors presented a number of integrable multi-component generalisations of the CH equation, the most popular of which corresponds to $\alpha=0,a=2,\kappa=\pm 1$ in \eqref{eq:Main} \cite{CI2008,CLZ2006,ELY2007a,Iva2006,Hen2009}. It was shown in \cite{CI2008} that this two-component generalisation of CH can be derived from a hydrodynamical setting, in which case $\kappa=1$. The question as to whether these  hydrodynamical model equations could be adapted so as to incorporate an underlying vorticity in the flow is a natural one. Physically, vorticity is vital for incorporating the ubiquitous effects of currents and wave-current interactions in fluid motion \cite{TK1997}, and furthermore the past decade has seen a burgeoning in the mathematical analysis of the full-governing equations for water waves
with vorticity--- cf. \cite{Con2011} for an overview of much of this work. From a model equation viewpoint, following \cite{Joh2003}, Ivanov \cite{Iva2009} derived a number of two-component equations for shallow water waves with underlying constant vorticity, including an integrable generalisation of the two-component CH equation. In Sections  \ref{sec:derivation} and \ref{sec:system} of this paper we extend this work to derive \eqref{eq:Main}, a family of two-component systems parameterised by $a$, where $a=2$ generalises the CH equation, and $a=3$ generalises the DP equation.

In Section~\ref{sec:geometry} we present a geometrical interpretation of~\eqref{eq:Main}, which is shown to correspond (when $a=2$) to the \emph{Arnold--Euler equation} of a right-invariant metric on the infinite dimensional Lie group
\begin{equation*}
  (\DiffS \circledS \CS) \times \RR.
\end{equation*}
This geometrical approach goes back to the pioneering work of V. Arnold~\cite{Arn1966}, published in 1966, who recast the equation of motion of a perfect fluid (with fixed boundary) as the geodesic flow on the volume-preserving diffeomorphisms group of the domain. For the little history, Arnold's paper was written (in French) for the bi-century of the 1765's paper of Euler~\cite{Eul1765a} (also written in French) who recast the equation of motion of a free rigid body as the geodesic flow on the rotation group. As acknowledged by Arnold himself, his paper concentrated on the \emph{geometrical ideas} and not on the difficult analytical technicalities that are inherent when one works with an \emph{infinite dimensional diffeomorphisms group} rather than a \emph{finite dimensional Lie group}. In 1970, Ebin \& Marsden~\cite{EM1970}  reconsidered this geometric approach from the analytical point of view. They proposed to consider the group of smooth diffeomorphisms as an \emph{inverse limit of Hilbert manifolds}. The remarkable observation is that, in this framework, the Euler equation (a PDE) can be recast as an ODE (the geodesic equation) on these Hilbert manifolds. Furthermore, following their approach, if we can prove \emph{local existence and uniqueness of the geodesics} (ODE), then the PDE is \emph{well-posed}. This technique has been used in~\cite{Mis2002,CK2003} for the periodic Camassa--Holm equation. It was extended to (non metric) geodesic flows such as the Degasperis--Procesi equation in~\cite{EK2011} and to right-invariant metrics induced by fractional Sobolev norm (non-local inertia operators) in~\cite{EK2014}. It is used in Section~\ref{sec:well-posedness} to establish the well-posedness of the system \eqref{eq:Main}. Finally, in Section~\ref{sec:global-solutions}, we derive some \textit{a priori} estimates on~\eqref{eq:Main} which lead to a criteria for global existence of solutions.

\section{Derivation of the model equation}
\label{sec:derivation}

In this section we apply formal asymptotic methods to the full-governing equations for two-dimensional water waves with an underlying constant vorticity $\alpha$ to  derive the two-component system \eqref{eq:Main}. We choose Cartesian-coordinates with the $z-$axis pointing vertically upwards and the $x-$axis perpendicular to the crest-lines of the waves, the flow being in the positive $x-$direction. We let $z=0$ denote the location of the flat bed and assume that the mean-depth, or the depth of the undisturbed water domain, is given by $h_{0}$. The velocity field of the two dimensional water flow is given by $(u(t,x,z),w(t,x,z))$ with the water's free surface $z=h_{0}+\eta(t,x)$ --- we denote by $\Omega_{\eta}$ the fluid domain with free boundary. For water waves which are large in scale it is reasonable to make the simplifying assumptions of homogeneity (constant density) and non-viscosity (no internal friction forces). We decompose the pressure
\begin{equation*}
  P(x,z,t)=P_{0}+\rho g(h_{0}-z)+p(x,z,t)
\end{equation*}
into the hydrostatic pressure term (where $\rho$ is the fluid density and $P_{0}$ is the constant atmospheric pressure) and the deviation from hydrostatic pressure $p$. The equations governing the motion of the fluid comprise Euler's equation
\begin{subequations}\label{eq:gov}
\begin{align}
  u_{t}+uu_{x}+wu_z=-\frac{1}{\rho}p_{x} \ &\text{ in } \Omega_{\eta},
  \\
  w_{t}+uw_{x}+ww_z=-\frac{1}{\rho}p_z & \text{ in } \Omega_{\eta},
\end{align}
together with the continuity equation
\begin{equation}
  u_{x}+w_z=0 \ \text{ in } \Omega_{\eta},
\end{equation}
and the dynamic and kinematic boundary conditions
\begin{align}
  w=\eta_{t}+u\eta_{x}, \ p=\eta \rho g \  & \text{ on } z=h_{0}+\eta,
  \\
  w=0 \ & \text{ on } z=0.
\end{align}
\end{subequations}

\subsection{Nondimensionalisation and scaling}
\label{subsec:scaling}

The process of nondimensionalisation enables us to reexpress the governing equations \eqref{eq:gov} in terms of dimensionless variables and functions, and so the resulting equations are purely mathematical. The benefit of this nondimensionalisation procedure is that it naturally introduces dimensionless parameters into the mathematical problem, which measure the relative size and significance of the physical terms  characteristic to the water wave problem we consider. These parameters enable  the formal asymptotic expansion procedures which we engage in when we derive \eqref{eq:Main} below. If $a$ is the typical amplitude of the waves we consider, and $\lambda$ represents a characteristic horizontal length scale (e.g. the wavelength) for the waves, we define the dimensionless parameters $\epsilon = a/h_{0}$, $\delta=h_{0}/\lambda$. Then, performing the change of variables
\begin{gather}\label{eq:cov}
  x\rightarrow \lambda x, \quad z \rightarrow z h_{0}, \quad t \rightarrow \frac{\lambda}{\sqrt{gh_{0}}}t, \quad \eta \rightarrow a \eta,
  \\ \nonumber
  u\rightarrow\epsilon \sqrt{gh_{0}}u, \quad w\rightarrow \epsilon \delta \sqrt{gh_{0}}w, \quad p\rightarrow\epsilon \rho g h_{0}p,
\end{gather}
the new scaled variables are nondimensionalised. In terms of the new variables the system \eqref{eq:gov} assumes the form
\begin{align*}
  u_{t}+\epsilon(uu_{x}+wu_z)&=-p_{x},
  \\
  \delta^{2}(w_{t}+\epsilon(uw_{x}+ww_z))&=-p_z,
  \\
  u_{x}+w_z&=0,
  \\
  w&=\eta_{t}+\epsilon u\eta_{x}, \ p=\eta \rho g \quad \text{ on } z=1+\epsilon \eta,
  \\
  w&=0 \quad \text{ on } z=0.
\end{align*}
We note that $(u,w,p,\eta)\equiv (U(z),0,0,0)$ gives a solution of the above system, which represents laminar flows with a flat free surface and with an arbitrary shear. To incorporate undulating waves in the presence of a shear flow,  we make the transformation $u=(U(z)+\epsilon u)$. If $\epsilon=0$ (flat surface), then  we are simply left with a shear flow as above.
The system which describes waves in the presence of a shear flow is given by
\begin{subequations}\label{eqs:S1}
\begin{align}
  \label{eq:S1a} u_{t}+Uu_{x}+wU'+\epsilon(uu_{x}+wu_z)=-p_{x} & \ \text{ in } \Omega_{\epsilon},
  \\
  \label{eq:S1b} \delta^{2}(w_{t}+Uw_{x}+\epsilon(uw_{x}+ww_z))=-p_z & \ \text{ in } \Omega_{\epsilon},
  \\
  \label{eq:S1c} u_{x}+w_z=0 & \ \text{ in } \Omega_{\epsilon},
  \\
  \label{eq:S1d} w=\eta_{t}+(U+\epsilon u)\eta_{x}, \ p=\eta \rho g \ & \text{ on } z=1+\epsilon \eta,
  \\
  \label{eq:S1e} w=0 \ & \text{ on } z=0.
\end{align}
\end{subequations}
For the simplest nontrivial case of a laminar shear flow, we let $U(z)=\alpha z$, $\alpha$ constant and $0\leq z \leq 1$. We choose $\alpha>0$ whereby the underlying current is in the positive $x-$direction. The Burns condition \cite{Bur1953,FJ1970} for the non-dimensionalised scaled variables becomes
\begin{equation*}
  \int_{0}^{1}\frac{dz}{(U(z)-c)^{2}}=1,
\end{equation*}
and so
\begin{equation}\label{eq:ceq}
  c^{2}- \alpha c-1=0,
\end{equation}
giving us the nondimensionalised speed
\begin{equation*}
  c=\frac 1{2}(\alpha\pm \sqrt{\alpha^{2}+4})
\end{equation*}
of the travelling waves in linear approximation. We mention here that the Burns condition arises as a local bifurcation condition for shallow water waves with constant vorticity, cf. \cite{Con2011}. In physical coordinates the vorticity is given by $\omega=(U-u)_z-w_{x}$. Scaling the vorticity by $\omega=\sqrt{h_{0}/g} \omega$ we get
\begin{equation*}
  \omega =\alpha+\epsilon(u_z-\delta^{2}w_{x}).
\end{equation*}
As we seek a solution with constant vorticity we must have
\begin{equation}\label{eq:vort}
  u_z-\delta^{2}w_{x}=0.
\end{equation}
From \eqref{eq:vort}, \eqref{eq:S1c} and \eqref{eq:S1e} we get
\begin{subequations}\label{eqs:S2}
\begin{align}
  u=u_{0}-\delta^{2}\frac{z^{2}}{2}u_{0,xx}+\mathcal O(\epsilon^{2},\delta^4, \epsilon \delta^{2}),
  \\
  w=-zu_{0,x}+\delta^{2}\frac{z^3}{6}u_{0,xxx}+\mathcal O(\epsilon^{2},\delta^4, \epsilon \delta^{2}).
\end{align}
\end{subequations}
Here $u_{0}(x,t)$ is the leading order approximation for $u$ in an asymptotic expansion, and we see from \eqref{eq:vort} that $u_z=0$ when $\delta\rightarrow 0$ and hence $u_{0}$ is independent of $z$. From \eqref{eq:S1d}, together with \eqref{eqs:S2} we get
\begin{subequations}\label{eqs:S3}
\begin{equation}\label{eq:S3a}
  \eta_{t}+\alpha\eta_{x}+\left[(1+\epsilon \eta)u_{0} +\epsilon \frac{\alpha}{2}\eta^{2} \right]_{x}-\delta^{2}\frac{1}{6}u_{0,xxx}=0,
\end{equation}
ignoring terms of $\mathcal O(\epsilon^{2},\delta^4, \epsilon \delta^{2})$. From \eqref{eq:S1b}, \eqref{eq:S1d}, \eqref{eqs:S2} we have, to the same order
\begin{equation*}
  p=\eta-\delta^{2}\left[\frac{1-z^{2}}{2}u_{0,xt}+\frac{1-z^3}{3}\alpha u_{0,xx}\right].
\end{equation*}
Then from \eqref{eq:S1a} we have
\begin{equation}\label{eq:S3b}
  \left(u_{0}-\delta^{2} \frac 12 u_{0,xx}\right)_{t}+\epsilon u_{0}u_{0,x}+\eta_{x}-\delta^{2} \frac \alpha3 u_{0,xxx}=0.
\end{equation}
\end{subequations}
Letting $\delta,\epsilon \rightarrow 0$ in \eqref{eqs:S3} we get
\begin{subequations}\label{eqs:S4}
\begin{align}
  \label{eq:S4a} \eta_{t}+\alpha\eta_{x}+u_{0,x}=0,
  \\
  \label{eq:S4b} u_{0,t}+\eta_{x}=0,
\end{align}
\end{subequations}
giving us
\begin{equation}\label{eq:etaeq}
  \eta_{tt}+\alpha\eta_{tx}-\eta_{xx}=0.
\end{equation}
Travelling wave solutions $\eta(x-ct)$ of \eqref{eq:etaeq} have a speed $c$ which satisfies \eqref{eq:ceq}. There are two possible speeds $c$, corresponding to a left and right running wave respectively. In particular, from \eqref{eq:S4a} we get
\begin{equation}\label{eq:linear}
  \eta = cu_{0}+\mathcal{O}(\epsilon, \delta^{2}),
\end{equation}
when we choose a specific value of the wavespeed $c$. We note here that while $\epsilon$ and $\delta$ are generally not related, it is known \cite{CL2009} that the small-amplitude long-wave scaling in the absence of vorticity is $\epsilon=O(\delta^2)$, while $\epsilon=O(\delta)$ is the long-wave scaling for waves of moderate amplitude.

\section{The two component system}
\label{sec:system}
We introduce the auxiliary variable $\theta$, defined in terms of $\eta$ and $u_{0}$ as follows
\begin{equation}\label{eq:theta1}
    \theta = 1 + \epsilon k_{1}\eta + \epsilon^{2} k_{2}\eta^{2} + \epsilon\delta^{2} k_{3}u_{0,xx},
\end{equation}
which upon squaring and retaining terms to order $\oed$ gives
\begin{equation}\label{eq:theta2}
    \theta^{2} = 1 + \epsilon(2k_{1})\eta + \epsilon^{2}(k_{1}^{2} + 2k_{2})\eta^{2} +\epsilon\delta^{2}(2k_{3})u_{0,xx} +\mathcal{O}(\epsilon^3,\epsilon^{2}\delta^{2}).
\end{equation}
The coefficients $k_{1}$, $k_{2}$ and $k_{3}$ are as yet undetermined, but as we proceed it will be seen that specific values must be assigned to all three if the resulting system is to be integrable. At a later stage, $\theta$ will be rewritten in terms of the physical variable $\rho$, which satisfies one member of the two component system \eqref{eq:Main} we intend to analyse in the following sections.

\subsection{The $\rho$-component}
\label{subsec:rho-deriv}

Rewriting $\eta$ in terms of $\theta$ using \eqref{eq:theta1} we get
\begin{equation}\label{eq:rho-a}
\begin{split}
  \eta_{t} & = \frac{1}{k_{1} \epsilon}\theta_\xi - \epsilon\frac{k_{2}}{k_{1}}\eta\eta_{t} + \delta^{2}\frac{k_{3}}{k_{1}}u_{0,xxt},
  \\
  \eta_{x} & = \frac{1}{k_{1} \epsilon}\theta_{x} - \epsilon\frac{k_{2}}{k_{1}}(\eta^{2})_{x} + \delta^{2}\frac{k_{3}}{k_{1}}u_{0,xxx},
\end{split}
\end{equation}
and using \eqref{eq:S4a}, \eqref{eq:S4b} and \eqref{eq:linear}, we write to leading order
\begin{equation}\label{eq:rho-b}
\begin{split}
  u_{0,xxt} & = -\eta_{xxx} = -cu_{0,xxx} +\mathcal{O}(\epsilon,\delta^{2}),
  \\
  \eta_{t} & = -\alpha\eta_{x} - u_{0,x} + \mathcal{O}(\epsilon,\delta^{2}).
\end{split}
\end{equation}
Substituting the expressions \eqref{eq:rho-b} into equation \eqref{eq:rho-a} and keeping terms to order $\mathcal{O}(\epsilon,\delta^{2})$ we find
\begin{equation}\label{eq:rho-c}
\begin{split}
  \eta_{t} &= \frac{1}{k_{1} \epsilon}\theta_{t} +  \epsilon\frac{2k_{2}(1+\alpha c)}{k_{1}}\eta u_{0,x} +                       \delta^{2}\frac{k_{3}c}{k_{1}}u_{0,xxx}+\oed,
  \\
  \eta_{x} &= \frac{1}{k_{1} \epsilon}\theta_{x} - \epsilon\frac{2k_{2}c}{k_{1}}\eta u_{0,x} - \delta^{2}\frac{k_{3}}{k_{1}}u_{0,xxx}+\oed.
\end{split}
\end{equation}
Then forming the linear combination $\eta_{t} + \alpha\eta_{x}$ using the expressions in \eqref{eq:rho-c} we obtain
\begin{equation}\label{eq:rho-c1}
  \eta_{t}+\alpha\eta_{x} = \frac{1}{\epsilon k_{1}}(\theta_{t}+\alpha\theta_{x}) + \epsilon\frac{2k_{2}}{k_{1}}\eta u_{0,x} + \delta^{2}\frac{k_{3}(c-\alpha)}{k_{1}}u_{0,xxx} + \oed.
\end{equation}
Replacing the $\eta_{t} + \alpha\eta_{x}$ term in equation \eqref{eq:S3a} with corresponding the expression given by relation \eqref{eq:rho-c1} yields
\begin{multline}\label{eq:rho-d}
  \frac{1}{\epsilon k_{1}}\left(\theta_{t} + \alpha\theta_{x}\right) + \left((1+\epsilon\eta)u_{0} + \epsilon                 \frac{\alpha}{2}\eta^{2}+\epsilon\frac{k_{2}}{k_{1}}\eta u_{0}\right)_{x}  =
  \\
  -\delta^{2}\left(\frac{k_{3}(c-\alpha)}{k_{1}}-\frac{1}{6}\right)u_{0,xxx} + \oed=0.
\end{multline}
We impose the constraint
\begin{equation}\label{eq:constraint}
    \frac{k_{3}}{k_{1}} = \frac{1}{6(c-\alpha)},
\end{equation}
thereby eliminating terms of order $\delta^{2}$ which would otherwise introduce dispersion to this particular member of the two component system.
Using the approximation \eqref{eq:linear} and neglecting terms of order $\oed$, we may rewrite equation \eqref{eq:rho-d} as
\begin{equation}\label{eq:rho-e}
    \frac{1}{\epsilon k_{1}}\left(\theta_{t} + \alpha\theta_{x}\right) + \left((1+\epsilon\left(1 + \frac{\alpha c}{2}+\frac{k_{2}}{k_{1}}\right)\eta)u_{0}\right)_{x} + \oed =0.
\end{equation}
Having imposed only one constraint on the coefficients $k_{1}$ and $k_{3}$ so far,  we may also choose $k_{1}$ and $k_{2}$ such that
\begin{equation}\label{eq:rho-f}
  k_{1} = 1 + \frac{\alpha c}{2}+\frac{k_{2}}{k_{1}},
\end{equation}
in which case \eqref{eq:rho-e} becomes
\begin{equation}\label{eq:rho-e1}
  \frac{1}{\epsilon k_{1}}\left(\theta_{t} + \alpha\theta_{x}\right) + \left((1+\epsilon k_{1}\eta)u_{0}\right)_{x} + \oed =0.
\end{equation}
Keeping terms of order $\mathcal{O}(\epsilon,\delta^{2})$ we approximate the factor $1+\epsilon k_{1}\eta$ in \eqref{eq:rho-e1} by $\theta,$ resulting in
\begin{equation}\label{eq:rho-g}
    \frac{1}{\epsilon k_{1}}\left(\theta_{t} + \alpha\theta_{x}\right) + \left(\theta u_{0}\right)_{x}=0.
\end{equation}
Under the Galilean transformation $x\to x-\alpha t,\ t \to t$, equation \eqref{eq:rho-g} becomes
\begin{equation}\label{eq:rho-h}
  \frac{1}{\epsilon k_{1}}\theta_{t} + (\theta u_{0})_{x} = 0.
\end{equation}
We now redefine the auxiliary function $\theta$ in terms of the physical variable $\rho$ as follows
\begin{equation}\label{eq:rho-i}
  \rho^{\frac{1}{a-1}} = \theta, \qquad a \neq 1.
\end{equation}
We substitute \eqref{eq:rho-i} into \eqref{eq:rho-h}, and upon multiplying the resulting expression by $a-1$ we find
\begin{equation}\label{eq:rho-j}
  \frac{1}{\epsilon k_{1}}\rho_{t} + u_{0}\rho_{x} = (1-a)\rho u_{0,x},
\end{equation}
which up to a rescaling is the first member \eqref{eq:2compa} of our two component system \eqref{eqs:2compsys}.

\subsection{The $m$-component}
\label{subsec:m-deriv}

Using \eqref{eq:theta2}, we express $\eta_{x}$ in terms of the auxiliary function $\theta$ to find
\begin{equation}\label{eq:m1a}
    \eta_{x}=\frac{1}{2k_{1}\epsilon}(\theta^{2})_{x} - \epsilon\left(\frac{2k_{2}+k_{1}^{2}}{2k_{1}}\right)(\eta^{2})_{x}-\delta^{2}\frac{k_{3}}{k_{1}}u_{0,xxx} + \mathcal{O}(\epsilon^{2},\epsilon\delta^{2}),
\end{equation}
which when substituted into \eqref{eq:S3b} gives
\begin{multline*}
  \left(u_{0}-\delta^{2}\beta_{0}^{2}u_{0xx}\right)_{t}  + \mathcal{O}(\epsilon^{2},\epsilon\delta^{2}) = -\epsilon \left[\frac{u_{0}^{2}}{2} - \frac{k_{1}^{2}+2k_{2}}{2k_{1}}(\eta^{2})\right]_{x}
  \\
  - \delta^{2}\left[\left(\frac{k_{3}}{k_{1}}+\frac{\alpha}{3}\right)u_{0,x} - k_{0}u_{0,t}\right]_{xx} - \frac{1}{2\epsilon k_{1}}(\theta^{2})_{x}.
\end{multline*}
Here $\beta_{0}^{2} = 1/2 + k_{0}$ and $k_{0}$ is a constant yet to be determined. Using relations \eqref{eq:linear} and \eqref{eq:rho-b} above,  and neglecting terms of order $\mathcal{O}(\epsilon,\delta^{2})$, yields
\begin{multline}\label{eq:m1c}
  \left(u_{0}-\delta^{2}\left(\frac{1}{2}+k_{0}\right)u_{0xx}\right)_{t} + \oed =
  \\
  -\epsilon \left(1-c^{2}\frac{2k_{2}+k_{1}^{2}}{k_{1}}\right) \left(\frac{u_{0}^{2}}{2}\right)_{x}
  \\
  - \delta^{2}\left(\frac{k_{3}}{k_{1}}+\frac{\alpha}{3} + k_{0} c\right)u_{0,xxx} - \frac{1}{2\epsilon k_{1}}(\theta^{2})_{x}.
\end{multline}
At this point we reexpress  the $\theta^{2}$ term in \eqref{eq:m1c} in terms of $\rho^{2}$ using
\begin{equation}\label{eq:m1d}
  \theta = 1 + \epsilon\theta_{0},
\end{equation}
where $\theta_{0}$ is of order $\mathcal{O}(1,\epsilon,\delta^{2})$. Retaining terms to order $\mathcal{O}(\epsilon^{2},\epsilon\delta^{2})$, we may use a Taylor expansion to write
\begin{equation}\label{eq:m1e}
  \theta^{2(a-1)} = 1 + 2(a-1)\epsilon\theta_{0} + (a-1)(2a-3)\epsilon^{2}\theta_{0}^{2} + \mathcal{O}(\epsilon^3,\epsilon^{2}\delta^{2}) = \rho^{2}
\end{equation}
and as such
\begin{equation}\label{eq:m1f}
  \epsilon2\theta_{0} = \frac{1}{a-1}\rho^{2} - \epsilon^{2}(2a-3)\theta_{0}^{2} -\frac{1}{a-1} + \mathcal{O}(\epsilon^3,\epsilon^{2}\delta^{2}).
\end{equation}
In addition it follows from \eqref{eq:theta2} that
\begin{equation}\label{eq:m1g}
  \epsilon2\theta_{0} = \theta^{2} - 1 - \epsilon^{2}\theta_{0}^{2}.
\end{equation}
Comparing \eqref{eq:m1f} and \eqref{eq:m1g}, we find
\begin{equation}\label{eq:m1h}
  \theta^{2} = \frac{1}{a-1}\rho^{2} - \epsilon^{2}2(a-2)\theta_{0}^{2} + \frac{a-2}{a-1} + \mathcal{O}(\epsilon^3,\epsilon^{2}\delta^{2}).
\end{equation}
Moreover comparing \eqref{eq:theta1} and \eqref{eq:m1d}, we find
\begin{equation}\label{eq:m1i}
  \theta_{0} = k_{1}\eta + \epsilon k_{2}\eta^{2} + \delta^{2}k_{3}u_{0,xx},
\end{equation}
which we may replace in \eqref{eq:m1h} to obtain
\begin{equation}\label{eq:m1j}
  \theta^{2} = \frac{1}{a-1}\rho^{2} - \epsilon^{2}2c^{2}k_{1}^{2}(a-2)u_{0}^{2} + \frac{a-2}{a-1} + \mathcal{O}(\epsilon^3,\epsilon^{2}\delta^{2}),
\end{equation}
where we have used \eqref{eq:linear} to write $\eta^{2} = c^{2}u_{0}^{2} + \mathcal{O}(\epsilon,\delta^{2})$ in the term of order $\mathcal{O}(\epsilon^{2})$ above.
Equation \eqref{eq:m1j} allows us to replace the auxiliary function $\theta^{2}$ appearing in \eqref{eq:m1c} with the physical variable $\rho^{2}$, in which case we find
\begin{multline}\label{eq:m1k}
  m_{t} + \frac{1}{2\epsilon k_{1}(a-1)}(\rho^{2})_{x} +\oed =
  \\
  -\epsilon\left(1-c^{2}{\left(\frac{k_{1}^{2}+2k_{2}}{k_{1}}+2(a-2)k_{1}\right)}\right)\left(\frac{u_{0}^{2}}{2}\right)_{x}
  \\
  - \delta^{2}\left(\frac{k_{3}}{k_{1}}+\frac{\alpha}{3} + k_{0} c\right)u_{0,xxx},
\end{multline}
where we define,
\begin{equation}\label{eq:m1l}
  m = u_{0} - \delta^{2}\left(\frac{1}{2}+k_{0}\right)u_{0,xx}.
\end{equation}
Adding the term $\alpha m_{0,x}$ to both sides of (\ref{eq:m1k}) we find
\begin{multline}\label{eq:m1m}
  m_{t} + \alpha m_{0,x} + \frac{1}{2\epsilon k_{1}(a-1)}(\rho^{2})_{x} + \oed =
  \\
  \alpha u_{0,x} + \epsilon\left(1-c^{2}{\left(\frac{k_{1}^{2}+2k_{2}}{k_{1}}+(a-2)k_{1}\right)}\right)\left(\frac{u_{0}^{2}}{2}\right)_{x}
  \\
  + \delta^{2}\left(\frac{k_{3}}{k_{1}}-\frac{\alpha}{6} + k_{0}(c-\alpha)\right)u_{0,xxx},
\end{multline}
where on the right hand side we use $m_{0,x}$ as it appears in (\ref{eq:m1l}).
It was previously stated that $k_{0}$ is an undetermined coefficient, and so we may choose its value such that
\begin{equation}\label{eq:m1n}
  \frac{k_{3}}{k_{1}}-\frac{\alpha}{6} + k_{0}(c-\alpha) = 0.
\end{equation}
Combined with the Burns condition imposed on $c$, as determined by \eqref{eq:ceq}, along with the first constraint on $k_{1}$ and $k_{3}$ in \eqref{eq:constraint}, equation \eqref{eq:m1n} gives
\begin{equation*}
    \beta^{2}_{0} = k_{0} + \frac{1}{2} = \frac{\alpha c - \alpha^{2} - 1}{6(c-\alpha)^{2}} + \frac{1}{2} = \frac{1}{3c^{2}(c-\alpha)^{2}},
\end{equation*}
thus fixing $k_{0}$ uniquely in terms of $\alpha$ and $c$. With $k_{0}$ fixed, \eqref{eq:m1m} becomes
\begin{multline}\label{eq:m1o}
  m_{t} + \alpha m_{x} +\frac{1}{2\epsilon k_{1}(a-1)}(\rho^{2})_{x} + \oed =
  \\
  \alpha u_{0,x} + \epsilon\left(1-c^{2}{\left(\frac{k_{1}^{2}+2k_{2}}{k_{1}}+(a-2)k_{1}\right)}\right)\left(\frac{u_{0}^{2}}{2}\right)_{x}.
\end{multline}
So far we have only imposed two constraints on the coefficients $k_{1}$, $k_{2}$ and $k_{3}$ and so we may choose $k_{1}$ and $k_{2}$ such that,
\begin{equation}\label{eq:m1p}
  k_{1}(1+a) =  1- c^{2}\left(\frac{k_{1}^{2}+2k_{2}}{k_{1}}+2(a-2)k_{1}\right),
\end{equation}
which upon substitution into \eqref{eq:m1o} gives,
\begin{multline}\label{eq:m1q}
  m_{t} + \alpha m_{x} - \alpha u_{0,x} + \epsilon k_{1}(1+a) \left(\frac{u_{0}^{2}}{2}\right)_{x}
  \\
  + \frac{1}{2\epsilon k_{1}(a-1)}(\rho^{2})_{x} + \oed = 0.
\end{multline}
Furthermore applying the Galilean transformation introduced in Section \ref{subsec:rho-deriv} to \eqref{eq:m1q}  gives
\begin{multline}\label{eq:m1q2}
    m_{t}  -\alpha u_{0,x} + \epsilon {k_{1}}(1+a)\left(\frac{u_{0}^{2}}{2}\right)_{x}
    \\
    + \frac{1}{2\epsilon k_{1}(a-1)}(\rho^{2})_{x} + \oed = 0.
\end{multline}
We also note that to leading order we have $u_{0} = m + \mathcal{O}(\delta^{2}),$ and so we may write
\begin{equation}\label{eq:m1r}
    \left(\frac{u_{0}^{2}}{2}\right)_{x} = \frac{1}{1+a}(a u_{0,x}m + u_{0}m_{x}) + \mathcal{O}(\delta^{2}),
\end{equation}
and accordingly \eqref{eq:m1q} may be written as
\begin{equation}\label{eq:m1s}
    m_{t} -\alpha u_{0,x} + \epsilon au_{0,x}m + \epsilon u_{0}m_{x} + \frac{1}{2\epsilon k_{1}(a-1)}(\rho^{2})_{x} + \oed = 0.
\end{equation}
This is the second member \eqref{eq:2compb} of our two component system \eqref{eqs:2compsys}, up to  the following rescaling:
\begin{equation}
    x\to\beta_{0}x,\quad t\to \beta_{0}t,\quad u\to \frac{1}{2\epsilon k_{1}}u,\quad \rho \to \sqrt{\frac{\kappa}{2}(a-1)}\rho,
\end{equation}
where $\kappa > 0$ is an arbitrary parameter. With regard to this rescaling, we note that it is important here that $u,m$ be of order $O(\epsilon)$, otherwise following the rescaling $u,m$ would be $O(1/\epsilon)$ and all subsequent considerations would fail. Under this rescaling \eqref{eq:rho-j} and \eqref{eq:m1s} become,
\begin{subequations}\label{eqs:2compsys}
\begin{align}
  \label{eq:2compa} & \rho_{t} + u\rho_{x} + (a-1)u_{x}\rho = 0,
  \\
  \label{eq:2compb} & m_{t} - \alpha u_{0,x} + au_{0,x}m + u_{0}m_{x} + \kappa\rho\rho_{x} = 0,
\end{align}
\end{subequations}
where terms of order $\mathcal{O}(\epsilon,\delta^{2})$and higher have been neglected.  The coefficients $k_{1}$, $k_{2}$ and $k_{3}$ are completely determined by the constraints \eqref{eq:constraint}, \eqref{eq:rho-f} and \eqref{eq:m1p} in terms of $a,$ $c,$ and $\alpha$ as follows:
\begin{align*}
  k_{1} &= \frac{1}{(1+c^{2})(a+1)}+\frac{c^{2}}{a+1}, \\
  k_{2} &= \left(\frac{1}{(a+1)(1+c^{2})}+\frac{c^{2}(1-a)}{2(a+1)}-\frac{1}{2}\right)k_{1}, \\
  k_{3} &= \frac{k_{1}}{6(c-\alpha)}.
\end{align*}
Following minor relabelling, equations \eqref{eqs:2compsys} are transformed to \eqref{eq:Main}, which constitutes the two component system that will be investigated in the remainder of this article.
\begin{remark}
  In the particular case $a=2$ we reduce to the integrable two-component equation which was derived in~\cite{Iva2009}.
\end{remark}
\begin{remark}
 When $a=\frac{c^4+1}{(c^{2}+1)^{2}}$ we have $k_{2}=0$, and as a result of this a number of transformations become linear, in particular \eqref{eq:theta1}. Accordingly, in this setting it is convenient to express the physical variable $\eta$ in terms of the auxiliary variables $\theta$ or $\rho$, which should be useful in practical applications.
\end{remark}

\section{Geometric reformulation as a geodesic flow}
\label{sec:geometry}

In this section, we recast the system~\eqref{eq:Main} as a geodesic flow on the tangent bundle of a suitable infinite-dimensional Lie group. For simplicity, we will focus on the periodic case (i.e on the circle $\Circle$), but there is no obstacle to working on the non-periodic case (i.e on the real line), provided we correctly impose some conditions at infinity. When $\alpha = 0$, $a=2$ and $\kappa = 1$, the system~\eqref{eqs:2compsys} reduces to the \emph{two component Camassa--Holm equation} (2CH)
\begin{equation}\label{eq:2CH}
  \left\{
  \begin{split}
    m_{t} + um_{x} & = -2u_{x}m - \rho\rho_{x}, \\
    \rho_{t} +u\rho_{x}  & = -u_{x}\rho,
  \end{split}
  \right.
\end{equation}
where $m = u-u_{xx}$ and which was first derived in~\cite{OR1996}. In~\cite{EKL2011}, the system~\eqref{eq:2CH} was recast as the geodesic equations for a right-invariant Riemannian metric on a semi-direct product $\DiffS \circledS \CS$, where $\DiffS$ is the group of orientation-preserving, smooth diffeomorphisms of the circle and $\CS$ is the space of real, smooth functions on $\Circle$. The full system \eqref{eq:Main} (with $\alpha=0$) has been studied in the short-wave limit $m=-u_{xx}$ in \cite{Esc2012}. It is shown there that geometrically the two-component system \eqref{eq:Main} corresponds to a geodesic flow with respect to a linear, right-invariant, symmetric connection on a suitable semi-direct product. In addition, if $a=2$ the connection is compatible with a the metric induced by the norm
\begin{equation*}
  \norm{u_{xx}}_{L^{2}(\Circle)} + \norm{\rho}_{L^{2}(\Circle)},
\end{equation*}
where $u\in\CS/\mathbb{R}$ and $\rho\in \CS$. We find here for \eqref{eq:Main} a similar situation: if $a=2$ the geodesic flow is metric (cf. Theorem~\ref{thm:metric-case}) and if $a\ne 2$, the system \eqref{eq:Main} cannot be realized as a \emph{metric} geodesic flow, at least for a large class of inertia operators.

The above described geometric picture was in fact developed in~\cite{EK2011}, where it had been shown that \emph{every quadratic evolution equation} defined on the Lie algebra $\g$ of a Lie group $G$ can be interpreted as the \emph{geodesic equations} for a \emph{right-invariant linear connection} on $G$, although in this case the connection does not derive necessarily from an invariant metric. To be able to use this framework, we rewrite the system~\eqref{eqs:2compsys} as
\begin{equation}\label{eq:EHKL}
  \left\{
    \begin{split}
      m_{t} & = \alpha u_{x} - au_{x}m - um_{x} - \kappa\rho\rho_{x}, \\
      \rho_{t} & = - u\rho_{x} - (a-1)u_{x}\rho, \\
      \alpha_{t} & = 0,
    \end{split}
  \right.
\end{equation}
which is a quadratic evolution equation in the variables
\begin{equation*}
  (u,\rho,\alpha) \in \CS \times \CS \times \RR.
\end{equation*}

We will consider the Fr\'{e}chet Lie group
\begin{equation*}
  \GS:=(\DiffS \circledS \CS) \times \RR,
\end{equation*}
where the group product is given by
\begin{equation*}
  (\varphi_{1},f_{1},s_{1})\ast (\varphi_{2},f_{2},s_{2}):= (\varphi_{1}\circ\varphi_{2},f_{2} + f_{1}\circ\varphi_{2},s_{1}+s_{2})
\end{equation*}
for $(\varphi_{1},f_{1},s_{1}), \,(\varphi_{1},f_{2},s_{2}) \in \DiffS \times \CS \times \RR$, and where $\circ$ denotes composition of mappings. The neutral element is $(\id,0,0)$ and the inverse of an element $(\varphi,f,s)$ is given by
\begin{equation*}
(\varphi,f,s)^{-1}=(\varphi^{-1},-f\circ\varphi^{-1},-s).
\end{equation*}

The Lie algebra of $\GS$ is the vector space $\gs := \CS \oplus \CS \oplus \RR$ endowed with the Lie bracket
\begin{equation*}
[(u_{1},\rho_{1},\alpha_{1}),(u_{2},\rho_{2},\alpha_{2})]:=(u_{1,x}u_{2}-u_{2,x}u_{1},\rho_{1,x}u_{2}-\rho_{2,x}u_{1},0),
\end{equation*}
for $(u_{1},\rho_{1},\alpha_{1}),\,(u_{2},\rho_{2},\alpha_{2})\in\gs$.

Let $R_{g}$ denote the right translation on $\GS$ by the element $g = (\varphi_{2},f_{2},s_{2})$, and $T_{h}R_{g}$ be its tangent map at $h = (\varphi_{1},f_{1},s_{1})$. Then, we have
\begin{equation}\label{eq:tangent-map}
  T_{h}R_{g}.V = (v_{1} \circ\varphi_{2},\sigma_{1} \circ\varphi_{2},\alpha_{1}),
\end{equation}
where $V = (v_{1},\sigma_{1},\alpha_{1})$. Note that its expression does not depend on $h$. In particular, we will have a similar expression for the second order tangent map. More precisely, if $X = (\delta \varphi_{1},\delta f_{1},\delta s_{1},\delta v,\delta \sigma,\delta \alpha)$, we get
\begin{equation}\label{eq:second-tangent-map}
  T_{V}(TR_{g}). X = (\delta \varphi_{1}\circ\varphi_{2},\delta f_{1}\circ\varphi_{2},\delta s_{1},\delta v_{1}\circ\varphi_{2},\delta \sigma_{1}\circ\varphi_{2},\delta \alpha_{1}).
\end{equation}

Given a smooth path $g(t):=(\varphi(t),f(t),s(t)) \in \GS$, we define its \emph{Eulerian velocity}, which lies in the Lie algebra $\gs$, by
\begin{equation*}
    U(t) = TR_{g^{-1}(t)}.g_{t}(t).
\end{equation*}
Let $(u,\rho,\alpha)$ be the three components of $U$. Then, using~\eqref{eq:tangent-map}, we get
\begin{equation*}
  u = \varphi_{t} \circ \varphi^{-1}, \qquad \rho = f_{t} \circ \varphi^{-1}, \qquad \alpha = s_{t}.
\end{equation*}
Introducing the \emph{Lagrangian velocity} $V := (v, \sigma, \alpha)$, where
\begin{equation*}
  v := \varphi_{t} = u \circ \varphi, \qquad \sigma := f_{t} = \rho \circ \varphi, \qquad \alpha := s_{t},
\end{equation*}
we have
\begin{equation*}
  v_{t} = (u_{t} + uu_{x}) \circ \varphi, \qquad \sigma_{t} = (\rho_{t} + u\rho_{x}) \circ \varphi,
\end{equation*}
and we may rewrite equation~\eqref{eq:EHKL} as
\begin{equation*}
  \left\{
    \begin{split}
      v_{t} & = \left(A^{-1}([A,u]u_{x} + \alpha u_{x} - au_{x}Au - \kappa\rho\rho_{x})\right)\circ\varphi, \\
      \sigma_{t} & = \Big((1-a)u_{x}\rho\Big)\circ\varphi, \\
      \alpha_{t} & = 0,
    \end{split}
  \right.
\end{equation*}
where $A = 1 - D^{2}$ and $[A,u]v := A(uv) - u A(v)$. Therefore, the system~\eqref{eq:EHKL} is equivalent to
\begin{equation}\label{eq:geodesic-equations}
    \begin{cases}
        \ \partial_t \, (\varphi,f,s) = (v,\sigma,\alpha), \\
        \ \partial_t \, (v,\sigma,\alpha) = S_{(\varphi,f,s)}(v,\sigma,\alpha),
    \end{cases}
\end{equation}
where
\begin{equation}\label{eq:S-equivariance}
  S_{(\varphi,f,s)} := TTR_{(\varphi,f,s)} \circ S \circ TR_{(\varphi,f,s)^{-1}},
\end{equation}
and
\begin{multline}\label{eq:S-expression}
  S(u,\rho,\alpha) := S_{(\id,0,0)}(u,\rho,\alpha)
  \\
  \Big(A^{-1}([A,u]u_{x} + \alpha u_{x} - au_{x}Au - \kappa\rho\rho_{x}), (1-a)u_{x}\rho, 0\Big).
\end{multline}
The \emph{second order vector field}
\begin{equation}\label{eq:spray}
  F(g,V) = (g,V,V,S_{g}(V))
\end{equation}
defined on
\begin{equation*}
  T\GS \simeq (\DiffS \times \CS \times \RR) \times (\CS \oplus \CS \oplus \RR)
\end{equation*}
is quadratic on $V$ and is called a \emph{spray}. It corresponds to the \emph{geodesic flow of a symmetric linear connection} on $\GS$ (see \cite[Section IV.3]{Lan1999}).

\begin{remark}
Note that the relation~\eqref{eq:S-equivariance} is just the fourth component of the equation
\begin{equation*}
  F(TR_{g}.U) = TTR_{g}.F(U),
\end{equation*}
where $g = (\varphi,f,s)$, $U = (u,\rho,\alpha)$ and which traduces the right-invariance of the spray $F : T\GS \to TT\GS$.
\end{remark}

When the parameter $a=2$, the spray $F$ derives from a right-invariant metric on $\GS$ and the system of equations~\eqref{eq:EHKL} corresponds to the \emph{Arnold-Euler equations} of this metric. Let us briefly recall this formalism. Given a Lie group $G$ and  its Lie algebra $\g$, any inner product on $\g$ induces a \emph{right-invariant} Riemannian metric on $G$ by extending it by right-translation. If this inner product on $\g$ is represented by an invertible operator $\mathbb{A} : \g \to \g^{\ast}$, for historical reasons, going back to the work of Euler on the motion of the rigid body, this operator $\mathbb{A}$ is called the \emph{inertia operator} of the system. Then a path $g(t)$ on $G$ is a geodesic for this right-invariant Riemannian metric, if and only if, its Eulerian velocity $u(t) := TR_{g(t)^{-1}}. \dot{g}(t)$ is solution of the so-called Arnold-Euler equation
\begin{equation*}
  u_{t} = - \mathbb{B}(u,u),
\end{equation*}
where
\begin{equation*}
  \mathbb{B}(u_{1},u_{2}) := \frac{1}{2}\Big( \ad_{u_{1}}^{\top} u_{2} + \ad_{u_{2}}^{\top} u_{1} \Big),
\end{equation*}
where $u_{1},u_{2} \in \g$ and $\ad_{u}^{\top}$ is the adjoint of the operator $\ad_{u}$, with respect to $\mathbb{A}$ (see~\cite{Arn1966} for instance).

\begin{theorem}\label{thm:metric-case}
Given $\kappa>0$, the system~\eqref{eq:EHKL} is, for $a=2$, the Arnold-Euler equation on the Lie algebra $\gs$ of the Fr\'{e}chet Lie group $\GS$ associated to the inner product
\begin{multline}\label{eq:inner-product}
  \langle (u_{1},\rho_{1},\alpha_{1}),(u_{2},\rho_{2},\alpha_{2}) \rangle = \int_{\Circle} u_{1}u_{2}\,dx + \int_{\Circle} u_{1,x}u_{2,x}\,dx
  \\
  + \kappa \int_{\Circle} \rho_{1}\rho_{2}\,dx - \frac{1}{2}\int_{\Circle} \left( u_{1}\alpha_{2} + u_{2}\alpha_{1}\right) dx + \frac{1}{2} \alpha_{1}\alpha_{2},
\end{multline}
where $u_{i},\,\rho_{i}\in\CS$ and $\alpha_{i}\in\mathbb{R}$, for $i=1,2$.
\end{theorem}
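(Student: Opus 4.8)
The plan is to verify directly that the inertia operator $\mathbb{A}$ associated to the inner product \eqref{eq:inner-product} transforms the Arnold-Euler equation $u_t = -\mathbb{B}(u,u)$ into the system \eqref{eq:EHKL} with $a=2$. First I would read off the inertia operator: writing $U_i = (u_i,\rho_i,\alpha_i)$, the pairing $\langle U_1, U_2\rangle = \langle \mathbb{A}U_1, U_2\rangle$ (with the $L^2$-type pairing on the right) gives
\begin{equation*}
  \mathbb{A}(u,\rho,\alpha) = \Big( Au - \tfrac{1}{2}\alpha,\ \kappa\rho,\ -\tfrac{1}{2}\!\int_{\Circle}\! u\,dx + \tfrac{1}{2}\alpha\Big),
\end{equation*}
where $A = 1 - D^2$; here the first slot is a function, the second is a function, and the third is a scalar (the $\alpha$-slot pairs via the last two terms of \eqref{eq:inner-product}). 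I would note $\mathbb{A}$ is invertible on $\gs$ (the $2\times 2$ block coupling $u$ and $\alpha$ is nondegenerate once one uses $A$'s invertibility and the mean-value functional), so the metric is well-defined.

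Next I would compute $\ad_U$ from the Lie bracket given in the excerpt: for $U = (u,\rho,\alpha)$ and $W = (w,\tau,\beta)$,
\begin{equation*}
  \ad_U W = [U,W] = (u_x w - w_x u,\ \rho_x w - \tau_x u,\ 0).
\end{equation*}
Then $\ad_U^{\top}$ is defined by $\langle \ad_U^{\top} W_1, W_2\rangle = \langle W_1, \ad_U W_2\rangle$, i.e. $\langle \mathbb{A}\,\ad_U^{\top} W_1, W_2\rangle = \langle \mathbb{A}W_1, \ad_U W_2\rangle$. The main computation is to evaluate the right-hand side, integrate by parts in $x$ to move all derivatives off $W_2 = (w_2,\tau_2,\beta_2)$, and thereby identify $\mathbb{A}\,\ad_U^{\top}W_1$ componentwise; then apply $\mathbb{A}^{-1}$. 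This is the standard derivation of an Euler equation from a metric, and it is exactly parallel to the $a=2$, $\kappa=1$, $\alpha=0$ computation carried out in \cite{EKL2011} for the two-component Camassa--Holm equation \eqref{eq:2CH}, the only new features being the extra $\RR$-factor and the cross terms $-\tfrac12\int(u_1\alpha_2+u_2\alpha_1)$ in \eqref{eq:inner-product}. I expect the $u$-component of $\mathbb{A}^{-1}\ad_U^\top U$ to produce, after using $m = Au$, precisely the terms $-2u_x m - u m_x$ together with the $\alpha u_x$ term coming from the $\alpha$-$u$ coupling and the $-\kappa\rho\rho_x$ term from the $\kappa\|\rho\|_{L^2}^2$ part, while the $\rho$-component yields $-u\rho_x - u_x\rho$ and the $\alpha$-component yields $0$ (the bracket's third slot being zero, $\ad_U$ annihilates the $\RR$-direction, so $\alpha_t = 0$ is automatic).

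Concretely, after computing $\mathbb{B}(U,U) = \ad_U^{\top}U$ I would expand $u_t = -\mathbb{A}^{-1}\ad_U^\top U$ in the three slots. The first slot, before inverting $\mathbb{A}$, should read $A u_t - \tfrac12\alpha_t = -2u_x Au - u(Au)_x - \kappa\rho\rho_x + \alpha u_x$ (the cross term contributing the $\alpha u_x$ after symmetrization in $\mathbb{B}$), and since $\alpha_t = 0$ this is exactly $m_t = \alpha u_x - 2u_x m - u m_x - \kappa\rho\rho_x$, which is \eqref{eq:EHKL} with $a=2$. The second slot gives $\kappa\rho_t = \kappa(-u\rho_x - u_x\rho)$, i.e. the $\rho$-equation, and the third slot gives $0 = 0$. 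The main obstacle — really the only place where care is needed — is bookkeeping the cross terms: making sure the antisymmetric pieces in the integration by parts are correctly paired with the symmetrization $\mathbb{B}(u,u) = \tfrac12(\ad_u^\top u + \ad_u^\top u)$ so that the non-local contributions from $\mathbb{A}^{-1}$ (involving $\int u\,dx$) cancel and leave the clean local form \eqref{eq:EHKL}. I would close by remarking that $a=2$ is forced because the coefficient of $u_x m$ in the Euler equation is dictated by the transport structure of $\ad^\top$ relative to $A$, so no other value of $a$ arises from a metric of this type — consistent with the remark preceding the theorem.
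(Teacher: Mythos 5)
Your proposal follows essentially the same route as the paper's proof: read off the inertia operator $\mathbb{A}(u,\rho,\alpha)=\bigl(Au-\tfrac{\alpha}{2},\,\kappa\rho,\,\tfrac12(\alpha-\int_{\Circle}u)\bigr)$ (which matches the paper's exactly), compute $\ad_U^{\top}$ by integration by parts against the bracket, invert $\mathbb{A}$, and check that the nonlocal terms $\int u_xAu$ and $\int\rho_x\rho$ vanish in the symmetric combination so that $\mathbb{B}(U,U)$ reproduces the right-hand side of~\eqref{eq:EHKL} with $a=2$. The plan is correct and matches the paper's argument step for step.
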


\begin{remark}
Note that the norm induced by \eqref{eq:inner-product} on $\gs$
\begin{equation*}
  \norm{(u,\rho,\alpha)}^{2} = \int_{\Circle} u_{x}^{2}\,dx + \int_{\Circle} \left(u - \frac{\alpha}{2}\right)^{2} dx + \frac{\alpha^{2}}{2}+\int_{\Circle} \kappa\rho^{2}dx
\end{equation*}
is equivalent to the Hilbert-norm
\begin{equation*}
  \norm{(u,\rho,\alpha)}^{2} = \norm{u}_{H^{1}}^{2} + \norm{\rho}_{L^{2}}^{2} + \abs{\alpha}^{2}.
\end{equation*}
\end{remark}

\begin{proof}
The inner product~\eqref{eq:inner-product} can be rewritten as
\begin{equation*}
  \langle (u_{1},\rho_{1},\alpha_{1}),(u_{2},\rho_{2},\alpha_{2}) \rangle = \int_{\Circle} (u_{1},\rho_{1},\alpha_{1}) \cdot \mathbb{A}(u_{2},\rho_{2},\alpha_{2}) \,dx
\end{equation*}
where the inertia operator $\mathbb{A}$ is defined by
\begin{equation*}
  \mathbb{A}(u,\rho,\alpha) = \left(Au - \frac{\alpha}{2},\kappa\rho,\frac{1}{2}\left(\alpha - \int_{\Circle}u\right)\right),\quad (u,\alpha,\rho)\in \gs,
\end{equation*}
$Au:=u-u_{xx}$, and $\cdot$ stands for the usual inner product in $\RR^{3}$. Set $U_{i} = (u_{i},\rho_{i}, \alpha_{i})$, for $i = 1,2,3$. We have
\begin{equation*}
  \ad_{U_{1}} U_{2} = (u_{1,x}u_{2}-u_{2,x}u_{1},\rho_{1,x}u_{2}-\rho_{2,x}u_{1},0),
\end{equation*}
and therefore, after some integrations by parts, we get
\begin{equation*}
  \langle \ad_{U_{1}} U_{2}, U_{3} \rangle = \int_{\Circle} (u_{2},\rho_{2},\alpha_{2}) \cdot (f,g,0)\, dx ,
\end{equation*}
where

\begin{equation*}
  f = 2u_{1,x}Au_{3} + u_{1}Au_{3,x} - \alpha_{3}u_{1,x} + \kappa \rho_{1,x}\rho_{3}, \qquad g = \kappa (u_{1}\rho_{3})_{x}.
\end{equation*}
Note now that the equation
\begin{equation*}
  \mathbb{A}(\tilde{u},\tilde{\rho},\tilde{\alpha}) = (f,g,0)
\end{equation*}
has a unique solution which is given by
\begin{equation*}
  \tilde{u} = A^{-1}f + \int_{\Circle} f \,dx, \qquad \tilde{\rho} = \frac{1}{\kappa} g, \qquad \tilde{\alpha} = 2 \int_{\Circle} f \,dx.
\end{equation*}
Thus we have  $\ad_{U_{1}}^{\top} U_{3} = (\tilde{u},\tilde{\rho},\tilde{\alpha})$, where
\begin{multline*}
  \tilde{u} =  A^{-1}\Big( 2u_{1,x}Au_{3} + u_{1}Au_{3,x} - \alpha_{3}u_{1,x} + \kappa \rho_{1,x}\rho_{3} \Big)
  \\
  + \int_{\Circle} (u_{1,x}Au_{3} + \kappa\rho_{1,x}\rho_{3}) \,dx,
\end{multline*}
and
\begin{equation*}
  \tilde{\rho} = (u_{1}\rho_{3})_{x} ,\qquad  \tilde{\alpha} = 2\int_{\Circle} (u_{1,x}Au_{3} + \kappa\rho_{1,x}\rho_{3}) \,dx.
\end{equation*}
We conclude therefore that
\begin{equation*}
  \mathbb{B} (U,U) = \Big( A^{-1}( 2u_{x}Au + uAu_{x} - \alpha u_{x} + \kappa \rho_{x}\rho ), (u\rho)_{x}, 0 \Big),
\end{equation*}
where $U = (u, \rho,\alpha)$ and that the equation
\begin{equation*}
  U_{t} = - \mathbb{B} (U,U)
\end{equation*}
is equivalent to \eqref{eq:EHKL}, when $a=2$.
\end{proof}

\section{Well-posedness of the equation}
\label{sec:well-posedness}

Our strategy will be to study the Cauchy problem for the geodesic equations~\eqref{eq:geodesic-equations}. Following Ebin \& Marsden's approach \cite{EM1970}, if we can prove \emph{local existence and uniqueness of geodesics} of the ODE~\eqref{eq:geodesic-equations} on $TG$, then the PDE~\eqref{eq:EHKL} is \emph{well-posed}. To do so, we need to introduce an approximation of the Fr\'{e}chet--Lie group $\DiffS$ by \emph{Hilbert manifolds}. Let $\HH{q}$ be the completion of $\CS$ for the norm
\begin{equation*}
  \norm{u}_{H^{q}}:= \left( \sum_{k \in \ZZ}(1 + k^{2})^{q}\abs{\hat{u}_{k}}^{2} \right)^{1/2},
\end{equation*}
where $q\in \RR, q \ge 0$. We recall that $\HH{q}$ is a multiplicative algebra for $q > 1/2$ (cf. \cite[Theorem 2.8.3]{Tri1983}). This means that
\begin{equation*}
  \norm{uv}_{\HH{q}} \lesssim \norm{u}_{\HH{q}} \norm{v}_{\HH{q}}, \quad u,v \in \HH{q}.
\end{equation*}

A $C^{1}$ diffeomorphism $\varphi$ of $\Circle$ is of class $H^{q}$ if for any of its lifts to $\RR$, $\tilde{\varphi}$, we have
\begin{equation*}
  \tilde{\varphi} - \mathrm{id} \in \HH{q}.
\end{equation*}
For $q > 3/2$, the set $\D{q}$ of $C^{1}$-diffeomorphisms of the circle which are of class $H^{q}$ has the structure of a \emph{Hilbert manifold}, modelled on $\HH{q}$ (see~\cite{EM1970} or \cite{IKT2013}). Contrary to $\DiffS$, the manifold $\D{q}$ is only a \emph{topological group} and \emph{not a Lie group} (composition and inversion in $\D{q}$ are continuous but \emph{not differentiable}). More precisely, the following regularity properties for $q > 3/2$ are well-known (see~\cite{IKT2013} for instance).
\begin{enumerate}
  \item The mapping
    \begin{equation*}
	  u \mapsto R_{\varphi}(u) := u \circ \varphi, \qquad \HH{q} \to \HH{q}
    \end{equation*}
    is smooth, for any $\varphi \in \D{q}$.
  \item The mapping
    \begin{equation*}
	  (u,\varphi) \mapsto u \circ \varphi, \qquad \HH{q+k} \times \D{q} \to \HH{q}
    \end{equation*}
    is of class $C^{k}$.
  \item The mapping
    \begin{equation*}
	  \varphi \mapsto \varphi^{-1}, \qquad \D{q+k} \to \D{q}
    \end{equation*}
    is of class $C^{k}$.
\end{enumerate}

Moreover the following slightly sharpen result was established in~\cite[Corollary B.6]{EK2014}.

\begin{lemma}\label{lem:composition_regularity}
Let $q > 3/2$. Then, the mappings
\begin{equation*}
  (\varphi, v) \mapsto v \circ \varphi, \qquad \D{q} \times \HH{q} \to \HH{q-1}
\end{equation*}
and
\begin{equation*}
  (\varphi, v) \mapsto v \circ \varphi^{-1}, \qquad \D{q} \times \HH{q} \to \HH{q-1}
\end{equation*}
are $C^{1}$.
\end{lemma}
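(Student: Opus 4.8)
The plan is to prove both statements by writing down the Fr\'{e}chet derivative explicitly and then verifying differentiability through a first-order Taylor expansion with integral remainder, all norms being taken in $\HH{q-1}$. Three elementary facts will be used repeatedly: $\HH{q-1}$ is a Banach algebra (since $q-1>1/2$); for $\psi\in\D q$ the right-composition operator $R_{\psi}\colon w\mapsto w\circ\psi$ is bounded on $\HH{q-1}$, locally uniformly in $\psi$, and $\psi\mapsto w\circ\psi$ is continuous $\D q\to\HH{q-1}$ for each fixed $w\in\HH{q-1}$ (sharpenings of the regularity properties (1)--(2) above, obtained by density of $\CS$ together with the change-of-variables estimate $\norm{w\circ\psi}_{\HH{q-1}}\lesssim\norm{w}_{\HH{q-1}}$, cf. \cite{IKT2013}); and the reciprocal of an $\HH{q-1}$ function bounded away from zero is again in $\HH{q-1}$. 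Note that the loss from $\HH{q}$ to $\HH{q-1}$ is unavoidable: the derivative of $v\circ\varphi$ in the $\varphi$-slot involves $v_{x}$, which only lies in $\HH{q-1}$.

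For the first map, set $\Phi(\varphi,v):=v\circ\varphi$ and propose
\[
  D\Phi(\varphi,v)(\delta\varphi,\delta v):=(v_{x}\circ\varphi)\,\delta\varphi+\delta v\circ\varphi ,
\]
which lies in $\HH{q-1}$ by the algebra property. The remainder $\Phi(\varphi+\delta\varphi,v+\delta v)-\Phi(\varphi,v)-D\Phi(\varphi,v)(\delta\varphi,\delta v)$ equals $e_{1}+e_{2}$, where
\[
  e_{1}=\int_{0}^{1}\bigl(v_{x}\circ(\varphi+t\delta\varphi)-v_{x}\circ\varphi\bigr)\,dt\cdot\delta\varphi, \qquad e_{2}=\int_{0}^{1}\bigl(\delta v_{x}\circ(\varphi+t\delta\varphi)\bigr)\,dt\cdot\delta\varphi .
\]
The algebra property gives $\norm{e_{1}}_{\HH{q-1}}\le\bigl(\sup_{0\le t\le1}\norm{v_{x}\circ(\varphi+t\delta\varphi)-v_{x}\circ\varphi}_{\HH{q-1}}\bigr)\norm{\delta\varphi}_{\HH{q}}$, which is $o(\norm{\delta\varphi}_{\HH{q}})$ by the strong continuity of $R_{(\cdot)}$, and $\norm{e_{2}}_{\HH{q-1}}\le C\norm{\delta v}_{\HH{q}}\norm{\delta\varphi}_{\HH{q}}$ by its local boundedness; both are $o(\norm{\delta\varphi}_{\HH{q}}+\norm{\delta v}_{\HH{q}})$. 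The same two inputs show that $(\varphi,v)\mapsto D\Phi(\varphi,v)$ is continuous into $\mathcal{L}(\HH{q}\times\HH{q},\HH{q-1})$, so $\Phi$ is $C^{1}$.

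For the second map the key auxiliary step is that $\iota\colon\varphi\mapsto\varphi^{-1}$ is $C^{1}$ as a map $\D q\to\HH{q-1}$, with $D\iota(\varphi)\delta\varphi=-(\delta\varphi\circ\varphi^{-1})\,(\varphi^{-1})_{x}$. Indeed, writing $\psi:=(\varphi_{0}+\delta\varphi)^{-1}$ and $\psi_{0}:=\varphi_{0}^{-1}$, the identities $(\varphi_{0}+\delta\varphi)\circ\psi=\id=\varphi_{0}\circ\psi_{0}$ give $\varphi_{0}\circ\psi-\varphi_{0}\circ\psi_{0}=-\delta\varphi\circ\psi$; a first-order Taylor expansion of the left-hand side about $\psi_{0}$, division by $\varphi_{0,x}\circ\psi_{0}=1/\psi_{0,x}\in\HH{q-1}$ (bounded away from zero), and the continuity of inversion on $\D q$ then give $\psi-\psi_{0}=-(\delta\varphi\circ\psi_{0})\,\psi_{0,x}+o(\norm{\delta\varphi}_{\HH{q}})$ in $\HH{q-1}$; continuity of $D\iota$ is checked as for $\Phi$. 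With this in hand one proposes, for $\mu(\varphi,v):=v\circ\varphi^{-1}$,
\[
  D\mu(\varphi,v)(\delta\varphi,\delta v):=\delta v\circ\varphi^{-1}-(v_{x}\circ\varphi^{-1})\,(\delta\varphi\circ\varphi^{-1})\,(\varphi^{-1})_{x} ,
\]
again in $\HH{q-1}$, and verifies the Taylor estimate as before: the $\delta v$-part $\delta v\circ\psi-\delta v\circ\psi_{0}$ is bounded by $C\norm{\delta v}_{\HH{q}}\norm{\psi-\psi_{0}}_{\HH{q-1}}=O(\norm{\delta v}_{\HH{q}}\norm{\delta\varphi}_{\HH{q}})$, while for the $\varphi$-part one expands $v\circ\psi-v\circ\psi_{0}$ by Taylor about $\psi_{0}$ and inserts the expansion of $\psi-\psi_{0}$ just obtained, the two resulting error terms each carrying an extra factor $\norm{\delta\varphi}_{\HH{q}}$.

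The main obstacle is precisely the loss of one derivative in the group operations of $\D q$. It is why the target must be $\HH{q-1}$ rather than $\HH{q}$; it prevents a one-line reduction by composing the regularity statements (1)--(3) (for instance $\varphi\mapsto\varphi^{-1}$ is only continuous, not $C^{1}$, as a self-map of $\D q$, so the estimate for $\iota$ has to be done by hand); and it is what makes the remainder analysis delicate, since one must check term by term that every error produced by freezing a composition operator is genuinely $o$ of the increment in $\HH{q-1}$ --- which works only because such errors always appear multiplied by a further power of the small increment. The auxiliary facts about composition on $\HH{q-1}$ and about reciprocals are routine and can be quoted from \cite{IKT2013,Tri1983}.
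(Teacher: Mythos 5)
The paper does not prove this lemma itself but imports it from \cite[Corollary B.6]{EK2014}, so there is no in-paper argument to compare against; your direct proof is correct and follows the standard route (essentially that of the cited reference): the $v$-slot is linear, the $\varphi$-slot is differentiated at the cost of one derivative via the fundamental-theorem-of-calculus remainder, and $\varphi^{-1}$ is handled by differentiating the identity $\varphi\circ\varphi^{-1}=\id$ by hand rather than by composing with the merely continuous inversion map $\D{q}\to\D{q}$. The one estimate you use implicitly without stating it is the local Lipschitz bound $\norm{\psi-\psi_{0}}_{\HH{q-1}}\lesssim\norm{\delta\varphi}_{\HH{q}}$ (needed both to make $r$ and the term $\psi_{0,x}(\delta\varphi\circ\psi-\delta\varphi\circ\psi_{0})$ into $o(\norm{\delta\varphi}_{\HH{q}})$ and later in the $\delta v$-part of $\mu$), but it follows from your own identity $\psi-\psi_{0}=\psi_{0,x}\left(-\delta\varphi\circ\psi-r\right)$ by absorbing the small prefactor of $\norm{\psi-\psi_{0}}_{\HH{q-1}}$ coming from $r$, so the argument is complete.
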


\begin{remark}
The tangent bundle of the Hilbert manifold $\D{q}$ is trivial because if $\mathfrak{t}: T\Circle \to \Circle \times \RR$ is a smooth trivialisation of the tangent bundle of the circle, then
\begin{equation*}
  \Psi : T\D{q} \to \D{q}\times \HH{q}, \qquad \xi \mapsto \mathfrak{t} \circ \xi
\end{equation*}
defines a smooth vector bundle isomorphism (see~\cite[Page~107]{EM1970}).
\end{remark}

As suggested in~\cite{EKL2011}, we shall use the following choice for the Hilbert approximation of $\GS$:
\begin{equation*}
  \G{q} := \D{q} \times \HH{q-1} \times \RR,
\end{equation*}
which is defined for $q > 3/2$.

\begin{theorem}\label{thm:smoothness-spray}
The spray $F$ defined on $T\DiffS$ by~\eqref{eq:spray} extends smoothly to a spray
\begin{equation*}
  F_{q} : T\G{q} \to TT\G{q},
\end{equation*}
for each $q > 3/2$.
\end{theorem}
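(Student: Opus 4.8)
The plan is to check directly that the formula $F_q(g,V) = (g, V, V, S_g(V))$, with $S_g$ as in~\eqref{eq:S-equivariance}--\eqref{eq:S-expression}, defines a smooth map $T\G{q} \to TT\G{q}$, that it restricts to $F$ on $T\DiffS$, and that it is quadratic in the fibre variable $V$, so that it is indeed a spray. Under the canonical trivialisations, three of the four components of $F_q$ are identity or projection maps, so everything comes down to the smoothness of $(g,V) \mapsto S_g(V)$ as a map into $\HH{q} \oplus \HH{q-1} \oplus \RR$. By~\eqref{eq:tangent-map} and~\eqref{eq:second-tangent-map} the tangent maps $TR_g$ and $TTR_g$ only involve the diffeomorphism part $\varphi$ of $g = (\varphi, f, s)$, so $S_g(V) = S_\varphi(v,\sigma,\alpha)$ and it suffices to prove smoothness in $(\varphi, v, \sigma, \alpha) \in \D{q} \times \HH{q} \times \HH{q-1} \times \RR$.

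First I would rewrite $S_\varphi$ purely in Lagrangian quantities, so that the only composition with $\varphi$ that survives is the conjugation of the inertia operator. Writing $R_\varphi(w) := w \circ \varphi$, the key identities are that $R_\varphi$ is an algebra homomorphism, that $R_\varphi \circ \partial_x = (\varphi_x^{-1}\,\partial_x) \circ R_\varphi$ (where $\varphi_x^{-1}$ also denotes the corresponding multiplication operator), and that $R_\varphi u = v$, $R_\varphi \rho = \sigma$ by construction. Hence, with $h(u,\rho,\alpha) := [A,u]u_x + \alpha u_x - au_x Au - \kappa\rho\rho_x$ the differential-polynomial right-hand side of~\eqref{eq:S-expression}, one obtains
\begin{equation*}
  S_\varphi(v,\sigma,\alpha) = \Big(\, P_\varphi\big(R_\varphi h\big),\; (1-a)\,\varphi_x^{-1} v_x\, \sigma,\; 0 \,\Big), \qquad P_\varphi := R_\varphi \circ A^{-1} \circ R_{\varphi^{-1}},
\end{equation*}
where $R_\varphi h$ is just the expression obtained from $h$ by the substitution $u \to v$, $\rho \to \sigma$, $\partial_x \to \varphi_x^{-1}\partial_x$; in particular no genuine composition is left in it. The middle component is visibly smooth: $\varphi \mapsto \varphi_x^{-1}$ is a smooth map from $\D{q}$ into the invertible elements of the Banach algebra $\HH{q-1}$ (recall $q-1 > 1/2$), and multiplication is bounded multilinear on $\HH{q-1}$. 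The same bookkeeping handles $R_\varphi h$: each of its monomials is a product of finitely many factors, all lying in $\HH{q-1}$ or $\HH{q}$ except at most one factor in $\HH{q-2}$ (coming from a ``second derivative'' $\varphi_x^{-1}\partial_x(\varphi_x^{-1}v_x)$, or from $\varphi_x^{-1}\sigma_x$), and for $q > 3/2$ pointwise multiplication maps $\HH{q-1} \times \HH{q-2} \to \HH{q-2}$ boundedly; since all factors depend smoothly (indeed linearly, or through $\varphi \mapsto \varphi_x^{-1}$) on $(\varphi, v, \sigma, \alpha)$, the map $(\varphi,v,\sigma,\alpha) \mapsto R_\varphi h$ is smooth from $\D{q} \times \HH{q} \times \HH{q-1} \times \RR$ into $\HH{q-2}$.

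The main obstacle is the conjugated operator $P_\varphi = R_\varphi \circ A^{-1} \circ R_{\varphi^{-1}}$: the right-translations $R_\varphi$, $R_{\varphi^{-1}}$ depend on $\varphi$ only with a loss of derivatives (cf. the regularity properties stated above and Lemma~\ref{lem:composition_regularity}), yet composing them with the smoothing operator $A^{-1}$ produces a jointly smooth family of bounded operators --- this is exactly the mechanism behind the Ebin--Marsden approach. Since $A = 1 - D^2$ is local, I would argue via the conjugated \emph{differential} operator $A_\varphi := R_\varphi \circ A \circ R_{\varphi^{-1}} = \id - \varphi_x^{-1}\partial_x\varphi_x^{-1}\partial_x$, whose coefficients are built from $\varphi_x^{-1}$ by multiplication and differentiation: exactly as for $R_\varphi h$ one checks that $\varphi \mapsto A_\varphi$ is a smooth map $\D{q} \to \mathcal{L}(\HH{q}, \HH{q-2})$. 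Each $A_\varphi$ is an isomorphism $\HH{q} \to \HH{q-2}$ (it is conjugate to $A$), so composing with the fixed isomorphism $A_0 := A$ gives a smooth map $\varphi \mapsto A_\varphi A_0^{-1}$ into the open set of invertible operators of $\mathcal{L}(\HH{q-2})$; since operator inversion is smooth there, $\varphi \mapsto A_0^{-1}\big(A_\varphi A_0^{-1}\big)^{-1} = A_\varphi^{-1} = P_\varphi$ is smooth $\D{q} \to \mathcal{L}(\HH{q-2}, \HH{q})$, and therefore $(\varphi, w) \mapsto P_\varphi w$ is smooth $\D{q} \times \HH{q-2} \to \HH{q}$. (Alternatively, one may simply invoke the general smoothness results for conjugated Fourier multipliers from~\cite{EK2011, EK2014}.) Composing with the map of the previous paragraph, $(\varphi, v, \sigma, \alpha) \mapsto P_\varphi(R_\varphi h)$ is smooth into $\HH{q}$, so $S_\varphi$ --- hence $S_g$, hence $F_q$ --- is smooth; $F_q|_{T\DiffS} = F$ by direct comparison of the defining formulas, and $F_q$ is a spray because $S_\varphi(v,\sigma,\alpha)$ is manifestly quadratic in $(v,\sigma,\alpha)$. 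The only genuinely delicate point is the smoothness of $P_\varphi$; everything else is the Banach-algebra bookkeeping that forces the restriction $q > 3/2$ and the shift to $\HH{q-1}$ in the $\rho$-slot of $\G{q}$.
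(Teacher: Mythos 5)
Your argument is correct in outline but takes a genuinely different route from the paper's at the one step that actually matters, namely the smoothness of the conjugated inverse of the inertia operator. The paper first rewrites $S$ in divergence form,
\begin{equation*}
  S^{1}(u,\rho,\alpha)=\tfrac{1}{2}A^{-1}D\left(2\alpha u-\kappa\rho^{2}+(a-3)u_{x}^{2}-au^{2}\right),
\end{equation*}
so that the nonlocal operator $A^{-1}D$ is applied to an expression containing only $u$, $u_{x}$ and $\rho$, hence lying in the multiplicative algebra $\HH{q-1}$; it then factors $A^{-1}D=\tfrac{1}{2}\bigl((1-D)^{-1}-(1+D)^{-1}\bigr)$ and conjugates the \emph{first-order} operators $1\pm D$, whose conjugates $1\pm\varphi_{x}^{-1}D$ have coefficients in $\HH{q-1}$, so smoothness of $\varphi\mapsto(1\pm D)_{\varphi}\in\mathrm{Isom}(\HH{q},\HH{q-1})$ plus smoothness of operator inversion finishes the proof without ever leaving the scale $\HH{q-1}$--$\HH{q}$. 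You instead conjugate and invert the second-order operator $A$ itself, which forces you down to $\HH{q-2}$. That is workable, but it creates two obligations that your write-up does not discharge.

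First, the claim that every monomial of $R_{\varphi}h$ has at most one factor in $\HH{q-2}$ is not true of $h=[A,u]u_{x}+\alpha u_{x}-au_{x}Au-\kappa\rho\rho_{x}$ as written: the terms $A(uu_{x})$ and $uA(u_{x})$ each contain third derivatives ($uu_{xxx}$, $(uu_{x})_{xx}$), i.e.\ factors in $\HH{q-3}$, and only the cancellation $[A,u]u_{x}=-3u_{x}u_{xx}$ (the commutator of the second-order operator $A$ with a multiplication operator is first order) brings the expression down to second derivatives. This cancellation is exactly what the paper's divergence-form rewriting of $S$ encodes, and you must carry it out explicitly before your $\HH{q-2}$ bookkeeping is legitimate. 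Second, for $3/2<q<2$ the space $\HH{q-2}$ has negative order, so the product estimate $\HH{q-1}\times\HH{q-2}\to\HH{q-2}$ and the boundedness of $R_{\varphi}$, $R_{\varphi^{-1}}$ on $\HH{q-2}$ --- which you need both for the coefficient $\varphi_{x}^{-3}\varphi_{xx}$ of $A_{\varphi}$ and to conclude that $A_{\varphi}=R_{\varphi}AR_{\varphi^{-1}}\in\mathrm{Isom}(\HH{q},\HH{q-2})$ --- are not consequences of the multiplicative-algebra property and composition lemmas quoted in the paper. They do hold for $q>3/2$ (via the product estimate $H^{s}\cdot H^{t}\subset H^{t}$ for $s>1/2$, $s+t>0$, $t\le s$, and a duality argument for composition on negative-order spaces), but they must be cited or proved. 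With these two points supplied your proof is a valid, if analytically heavier, alternative; the paper's factorisation into first-order resolvents is precisely the device that avoids both issues.
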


\begin{proof}
We first observe that the quadratic operator $S$ defined in~\eqref{eq:S-expression} can be rewritten as
\begin{equation*}
  S(u,\rho,\alpha) = \left(\frac{1}{2}A^{-1}D\left(2\alpha u - \kappa\rho^{2} + (a-3)u_{x}^{2} - au^{2}\right),(1-a)\rho u_{x},0\right),
\end{equation*}
where $D$ stands for $d/dx$. Therefore, the two non-vanishing components $S^{1}$, $S^{2}$ of $S_{(\varphi,f,s)}(v,\sigma,\alpha)$ are
\begin{equation*}
  S^{1} = \frac{1}{2}(A^{-1}D)_{\varphi}\left(2\alpha v - \kappa\sigma^{2} + (a-3)\frac{1}{\varphi_{x}^{2}}v_{x}^{2} - av^{2}\right)
\end{equation*}
and
\begin{equation*}
  S^{2} = (1-a)\frac{1}{\varphi_{x}}\sigma v_{x},
\end{equation*}
where
\begin{equation*}
  (A^{-1}D)_{\varphi} := R_{\varphi} \left(A^{-1}D\right) R_{\varphi^{-1}}.
\end{equation*}
Let $q > 3/2$, $\varphi \in \D{q}$, $v \in \HH{q}$ and $\sigma \in \HH{q-1}$. Then, the expression $S^{2}$ belongs to $\HH{q-1}$ because $\HH{q-1}$ is a multiplicative algebra and the expression $S^{1}$ belongs to $\HH{q}$ because, furthermore,
\begin{equation*}
  R_{\varphi^{-1}} \in \mathcal{L}(\HH{q-1},\HH{q-1}), \qquad R_{\varphi} \in \mathcal{L}(\HH{q},\HH{q})
\end{equation*}
by \cite[Lemma B.2]{EK2014}, and
\begin{equation*}
  A^{-1}D = DA^{-1}\in \mathcal{L}(\HH{q-1},\HH{q}).
\end{equation*}
It remains to be shown that $S^{1}$ and $S^{2}$ depend smoothly on $\varphi, v, \sigma$, which will achieve the proof. First, it is clear that the mapping
\begin{equation*}
  (\varphi, v, \sigma) \mapsto (1-a)\frac{1}{\varphi_{x}}\sigma v_{x}, \quad \D{q} \times \HH{q} \times \HH{q-1} \to \HH{q-1}
\end{equation*}
is smooth because $\HH{q-1}$ is a multiplicative algebra. For the same reason, it is clear that
\begin{equation*}
  (\varphi, v, \sigma) \mapsto 2\alpha v - \kappa\sigma^{2} + (a-3)\frac{1}{\varphi_{x}^{2}}v_{x}^{2} - av^{2},
\end{equation*}
from $\D{q} \times \HH{q} \times \HH{q-1}$ to $\HH{q-1}$ is smooth. Therefore, if we can prove that
\begin{equation*}
  (\varphi,w) \mapsto (A^{-1}D)_{\varphi}w, \qquad \D{q} \times \HH{q-1} \to \HH{q}
\end{equation*}
is smooth, we are done, using the chain rule. Now, observe that $(1-D)$ and $(1+D)$ belong to $\mathrm{Isom}(\HH{q},\HH{q-1})$, for $q \ge 1$, and that
\begin{equation*}
  A^{-1}D = \frac{1}{2}\left( (1-D)^{-1} - (1+D)^{-1}\right),
\end{equation*}
because $A = (1-D)(1+D)$. Therefore
\begin{equation*}
  (A^{-1}D)_{\varphi} = \frac{1}{2}\left( (1-D)_{\varphi}^{-1} - (1+D)_{\varphi}^{-1}\right).
\end{equation*}
But
\begin{equation*}
  D_{\varphi} = \frac{1}{\varphi_{x}}D
\end{equation*}
and hence
\begin{equation*}
  \varphi \mapsto (1\pm D)_{\varphi}
\end{equation*}
from $\D{q}$ to $\mathrm{Isom}(\HH{q},\HH{q-1})$ is smooth. Moreover, since the set $\mathrm{Isom}(\HH{q},\HH{q-1})$ is open in $\mathcal{L}(\HH{q},\HH{q-1})$ and the mapping
\begin{equation*}
    P \mapsto P^{-1}, \quad \mathrm{Isom}(\HH{q},\HH{q-1}) \to \mathrm{Isom}(\HH{q-1},\HH{q})
\end{equation*}
is smooth, we conclude that $(1-D)_{\varphi}^{-1}w$ and $(1+D)_{\varphi}^{-1}w$ both depend smoothly on $(\varphi,w)$.
\end{proof}

Applying the Cauchy--Lipschitz (or Picard--Lindel\"{o}f) theorem in the Hilbert manifold $T\G{q}$, we obtain the following local-existence result for the geodesics.

\begin{corollary}\label{cor:local-existence-Hq}
Let $q > 3/2$. For each initial data $(g_{0},V_{0})\in T\G{q}$, there exists a \emph{unique non-extendable} solution
\begin{equation*}
    (g,V)\in C^{\infty}(J_{q},T\G{q}),
\end{equation*}
of the Cauchy problem~\eqref{eq:geodesic-equations} for the geodesic equations, with $g(0)=g_{0}$ and $V(0)=V_{0}$, defined on some \emph{maximal interval of existence} $J_{q} = J_{q}(g_{0},V_{0})$, which is open and contains $0$. Moreover, the solution depends smoothly on the initial data.
\end{corollary}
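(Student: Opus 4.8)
The plan is to read the Cauchy problem~\eqref{eq:geodesic-equations} as an autonomous ordinary differential equation $\dot{y} = F_{q}(y)$ for the unknown $y = (g,V)$ on the Hilbert manifold $T\G{q}$, and then to invoke the Banach--manifold version of the Picard--Lindel\"{o}f theorem together with the standard results on smooth dependence of flows on initial conditions (see~\cite[Chapter~IV]{Lan1999}).

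First I would record that $\G{q} = \D{q} \times \HH{q-1} \times \RR$ is a smooth Hilbert manifold modelled on $\HH{q} \times \HH{q-1} \times \RR$, and that $T\G{q}$ is again a smooth Hilbert manifold: using the smooth trivialisation $\Psi : T\D{q} \to \D{q} \times \HH{q}$ recalled in the Remark above, $T\G{q}$ is globally diffeomorphic to $(\D{q} \times \HH{q-1} \times \RR) \times (\HH{q} \times \HH{q-1} \times \RR)$, so~\eqref{eq:geodesic-equations} may be written in a single global chart without any fuss. By Theorem~\ref{thm:smoothness-spray}, $F_{q}$ is a smooth (second-order) vector field on $T\G{q}$; in particular it is of class $C^{1}$ and therefore locally Lipschitz, and the right-hand side of~\eqref{eq:geodesic-equations} is precisely the smooth map $(g,V) \mapsto (V, S_{g}(V))$.

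Next I would apply the existence-and-uniqueness theorem for the flow of a locally Lipschitz vector field on a Banach manifold: for each $(g_{0},V_{0}) \in T\G{q}$ there is a unique maximal integral curve of $F_{q}$ through $(g_{0},V_{0})$ at $t=0$, with domain an open interval $J_{q} = J_{q}(g_{0},V_{0}) \ni 0$, and it is non-extendable by maximality. To promote $C^{1}$-regularity in $t$ to $C^{\infty}$ I would bootstrap in the usual way: if $y \in C^{k}(J_{q},T\G{q})$ then $F_{q}\circ y$ is $C^{k}$, being the composition of the $C^{\infty}$ map $F_{q}$ with $y$, so $\dot{y} = F_{q}(y)$ forces $y \in C^{k+1}$, and hence $y \in C^{\infty}(J_{q},T\G{q})$. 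Smooth dependence on $(g_{0},V_{0})$ is once more a standard property of flows of smooth vector fields on Banach manifolds (\cite[Chapter~IV]{Lan1999}): the local flow $(t,g_{0},V_{0}) \mapsto y(t;g_{0},V_{0})$ is smooth on its open domain of definition, which in particular yields smoothness of $(g_{0},V_{0}) \mapsto y(\cdot\,;g_{0},V_{0})$ into $C^{\infty}(J,T\G{q})$ for every compact subinterval $J$.

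I do not expect any genuine obstacle in this corollary: the entire analytic difficulty---that the conjugated inertia operator $(A^{-1}D)_{\varphi}$ together with the quadratic nonlinearities assembles into a map that is \emph{smooth} on the Sobolev scale despite the apparent loss of one derivative---has already been handled in Theorem~\ref{thm:smoothness-spray}. The only two points that deserve an explicit line are that $T\G{q}$ is a Hilbert (hence Banach) manifold modelled on a fixed space, so the ODE theorems of~\cite[Chapter~IV]{Lan1999} apply verbatim, and that ``depends smoothly on the initial data'' is to be read as the joint smoothness of the local flow, not merely as its continuity.
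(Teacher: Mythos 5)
Your proposal is correct and follows exactly the route the paper takes: the paper derives this corollary in one line by applying the Cauchy--Lipschitz (Picard--Lindel\"{o}f) theorem on the Hilbert manifold $T\G{q}$ to the smooth spray $F_{q}$ furnished by Theorem~\ref{thm:smoothness-spray}. Your additional remarks on the global trivialisation of $T\G{q}$, the bootstrap to $C^{\infty}$ in time, and the reading of ``smooth dependence'' as joint smoothness of the local flow are all consistent with the standard references the paper cites and fill in details the paper leaves implicit.
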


Finally, taking account of lemma~\ref{lem:composition_regularity}, we obtain the following.

\begin{corollary}\label{cor:well-posedness-Hq}
Let $q > 5/2$ and $\alpha \in \RR$. For each initial data
\begin{equation*}
  (u_{0},\rho_{0})\in \HH{q}\oplus\HH{q-1},
\end{equation*}
there exists a \emph{unique non-extendable} solution
\begin{equation*}
    (u,\rho) \in C^{0}(J_{q},\HH{q}\oplus\HH{q-1}) \cap C^{1}(J_{q},\HH{q-1}\oplus\HH{q-2}),
\end{equation*}
of the evolution equation~\eqref{eq:EHKL}, with $(u(0),\rho(0)) = (u_{0},\rho_{0})$, defined on some \emph{maximal interval of existence} $J_{q} = J_{q}(u_{0},\rho_{0})$, which is open and contains $0$. Moreover, the solution depends continuously on the initial data.
\end{corollary}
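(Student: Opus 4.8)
The plan is to transfer the local-existence result for the geodesic ODE in Corollary~\ref{cor:local-existence-Hq} to the PDE~\eqref{eq:EHKL} through the Lagrangian-to-Eulerian correspondence $u=v\circ\varphi^{-1}$, $\rho=\sigma\circ\varphi^{-1}$, $\alpha=s_{t}$ that was used to derive~\eqref{eq:geodesic-equations}. Concretely, for data $(u_{0},\rho_{0})\in\HH{q}\oplus\HH{q-1}$ and $\alpha\in\RR$, I would first apply Corollary~\ref{cor:local-existence-Hq} with $g_{0}=(\id,0,0)$ and $V_{0}=(u_{0},\rho_{0},\alpha)$ to obtain the non-extendable geodesic $(g,V)=(\varphi,f,s,v,\sigma,\alpha)\in C^{\infty}(J_{q},T\G{q})$ --- here $\alpha$ stays constant, since the third component of $S$ in~\eqref{eq:S-expression} vanishes --- and then set $u:=v\circ\varphi^{-1}$, $\rho:=\sigma\circ\varphi^{-1}$.

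Next I would establish the claimed regularity. The $C^{0}$ composition and inversion properties recalled above (for the $\rho$-component via $\D{q}\hookrightarrow\D{q-1}$, using $q-1>3/2$) give $(u,\rho)\in C^{0}(J_{q},\HH{q}\oplus\HH{q-1})$. For the time derivative, since $t\mapsto(\varphi(t),v(t))$ is $C^{\infty}$ into $\D{q}\times\HH{q}$ and $t\mapsto(\varphi(t),\sigma(t))$ is $C^{\infty}$ into $\D{q-1}\times\HH{q-1}$, I would invoke Lemma~\ref{lem:composition_regularity} at levels $q$ and $q-1$ respectively (the second use requires $q-1>3/2$) to conclude $(u,\rho)\in C^{1}(J_{q},\HH{q-1}\oplus\HH{q-2})$; this is exactly where the hypothesis $q>5/2$ is needed. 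Then I would check that $(u,\rho,\alpha)$ solves~\eqref{eq:EHKL}: substituting $v=u\circ\varphi$, $\sigma=\rho\circ\varphi$ together with $v_{t}=(u_{t}+uu_{x})\circ\varphi$ and $\sigma_{t}=(\rho_{t}+u\rho_{x})\circ\varphi$ into the second line of~\eqref{eq:geodesic-equations}, using the identity $[A,u]u_{x}=A(uu_{x})-um_{x}=-3u_{x}u_{xx}$ (so that $A^{-1}([A,u]u_{x})=uu_{x}-A^{-1}(um_{x})$, with $m=Au$), and composing the resulting equality with $\varphi^{-1}$ on the right, one recovers precisely~\eqref{eq:EHKL} with $(u(0),\rho(0))=(u_{0},\rho_{0})$.

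For uniqueness and maximality I would argue conversely. Given any solution $(u,\rho)$ of~\eqref{eq:EHKL} in the stated class on an open interval $I\ni0$ with $(u(0),\rho(0))=(u_{0},\rho_{0})$, I would take $\varphi$ to be the flow of the time-dependent velocity field $u$: since $q>3/2$ forces $\HH{q}\hookrightarrow C^{1}$, the classical ``no loss of derivatives'' estimate yields $\varphi\in C^{1}(I,\D{q})$ with $\varphi(0)=\id$. Setting $f(t):=\int_{0}^{t}\rho(\tau)\circ\varphi(\tau)\,d\tau$, $s(t):=\alpha t$, $v:=\varphi_{t}=u\circ\varphi$ and $\sigma:=f_{t}=\rho\circ\varphi$, I would verify --- running the computation of the previous paragraph backwards, and noting that by~\eqref{eq:EHKL} the material derivatives $u_{t}+uu_{x}$ and $\rho_{t}+u\rho_{x}$ are one derivative smoother than $u_{t}$ and $\rho_{t}$ alone (the dangerous term $um_{x}$ being cancelled by the commutator $[A,u]u_{x}=-3u_{x}u_{xx}$) --- that $(\varphi,f,s,v,\sigma,\alpha)\in C^{1}(I,T\G{q})$ is a solution of~\eqref{eq:geodesic-equations} through $(\id,0,0,u_{0},\rho_{0},\alpha)$. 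The uniqueness assertion of Corollary~\ref{cor:local-existence-Hq} then identifies this curve on $I\cap J_{q}$ with the one from the first step, so $(u,\rho)$ coincides there with the constructed solution; in particular $I\subseteq J_{q}$, so $J_{q}$ is also the maximal interval of existence for~\eqref{eq:EHKL}.

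For the continuous dependence I would compose three maps: $(u_{0},\rho_{0})\mapsto(\id,0,0,u_{0},\rho_{0},\alpha)$ (affine), the solution map of~\eqref{eq:geodesic-equations}, which is smooth on the open subset of $\RR\times T\G{q}$ where it is defined by Corollary~\ref{cor:local-existence-Hq}, and $(g,V)\mapsto(v\circ\varphi^{-1},\sigma\circ\varphi^{-1})$, which is continuous from $T\G{q}$ into $\HH{q}\oplus\HH{q-1}$; the composition gives continuous dependence of $(u,\rho)$ on $(u_{0},\rho_{0})$, locally uniformly in $t\in J_{q}$. The main obstacle I anticipate is twofold and purely analytic: the one-derivative loss in $t\mapsto v(t)\circ\varphi(t)^{-1}$, absorbed by Lemma~\ref{lem:composition_regularity} only at the price of passing from $q>3/2$ to $q>5/2$; and, in the uniqueness half, the classical but non-formal fact that the Lagrangian flow of an $\HH{q}$-valued velocity field remains a $C^{1}$-curve in $\D{q}$ as long as the velocity stays bounded in $\HH{q}$.
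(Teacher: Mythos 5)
Your argument is correct and is essentially the paper's own (the paper gives no written proof, only the remark that the corollary follows from Corollary~\ref{cor:local-existence-Hq} ``taking account of Lemma~\ref{lem:composition_regularity}'', i.e.\ exactly the Ebin--Marsden transfer $u=v\circ\varphi^{-1}$, $\rho=\sigma\circ\varphi^{-1}$ that you carry out). Your accounting of where $q>5/2$ enters --- continuity of $\sigma\circ\varphi^{-1}$ in $\HH{q-1}$ and the $C^{1}$ statement at level $q-1$ --- and the converse flow argument for uniqueness are the standard details the authors leave implicit.
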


Furthermore, using the right-invariance of the spray, an argument due to Ebin \& Marsden \cite{EM1970} implies that the interval of existence $J_{q}$ in theorem~\ref{cor:local-existence-Hq} is independent of the Sobolev index $q$ and that for any smooth initial condition the solution exists in the smooth category. A proof of the following result follows along the lines of the proof of \cite[Theorem 1.1]{EK2014}. We will only sketch it. Let $\Phi_{q}$ be the flow of the spray $F_{q}$ and $R_{\tau}$ be the (right) action of the rotation group $\Circle$ on $\G{q}$, defined by
\begin{equation*}
    \big(R_{\tau} \cdot (\varphi,f,s) \big)(x): = (\varphi(x+\tau),f(x+\tau),s),
\end{equation*}
where $(\varphi,f,s) \in \G{q}$. Note that if $V = (\varphi,f,s,v,\sigma,\alpha)\in T\D{q+1}$, then
\begin{equation*}
  s \mapsto TR_{s} \cdot V, \quad \Circle \to T\G{q}
\end{equation*}
is a $C^{1}$ map, and that
\begin{equation*}
    \frac{d}{ds} R_{\tau}\cdot V = (\varphi_{x},f_{x},0,v_{x},\sigma_{x},0).
\end{equation*}
Moreover, since the spray $F_{q}$ and its flow $\Phi_{q}$ are invariant under this action, we get
\begin{equation*}
    T_{V_{0}}\Phi_{q}(t).(\varphi_{0,x},f_{0,x},0,v_{0,x},\sigma_{0,x},0) = (\varphi_{x}(t),f_{x}(t),0,v_{x}(t),\sigma_{x}(t),0),
\end{equation*}
from which we deduce that if $V_{0}\in T\G{q+1}$ then so is $V(t)$, and conversely (by reversing time). We therefore also have a well-posedness result for \eqref{eq:EHKL} in the smooth category.

\begin{corollary}\label{cor:well-posedness-smooth}
Let $\alpha \in \RR$ and $(u_{0},\rho_{0})\in \CS\times\CS$ be given. Then, there exists an open interval $J$ with $0\in J$ and a unique non-extendable solution
\begin{equation*}
  (u, \rho) \in C^{\infty} \left(J, \CS \times \CS\right)
\end{equation*}
of \eqref{eq:EHKL} with initial datum $(u_{0},\rho_{0})$.
\end{corollary}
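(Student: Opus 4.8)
The plan is to deduce the smooth statement from the Hilbert-manifold results already established, via the ``no loss, no gain'' of regularity argument of Ebin \& Marsden, exactly along the lines announced before the statement. First I would fix some $q > 5/2$ and regard the smooth initial datum $(u_{0},\rho_{0})$ as an element of $\HH{q}\oplus\HH{q-1}$; Corollary~\ref{cor:local-existence-Hq} then produces a unique non-extendable geodesic $(g,V)\in C^{\infty}(J_{q},T\G{q})$, and Corollary~\ref{cor:well-posedness-Hq} gives the corresponding unique non-extendable solution $(u,\rho)$ of~\eqref{eq:EHKL} in $\HH{q}\oplus\HH{q-1}$ on the maximal interval $J_{q}$. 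The crux is that $J_{q}$ does not in fact depend on $q$: granting this, one sets $J:=J_{q}$ for any such $q$, so that $(u,\rho)(t)$ lies in $\bigcap_{q}\big(\HH{q}\oplus\HH{q-1}\big)=\CS\times\CS$ for every $t\in J$; smoothness in $t$ then follows from~\eqref{eq:EHKL} itself, since with $(u,\rho)$ smooth in $x$ the right-hand side is smooth in $x$, and a routine induction upgrades the $C^{0}\cap C^{1}$ time regularity to $C^{\infty}$ joint regularity. Uniqueness and non-extendability are inherited from the $H^{q}$ statements.

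The heart of the matter is therefore the $q$-independence of the maximal interval, and this is where the rotation-invariance of the spray enters. Let $\Phi_{q}$ be the flow of $F_{q}$ and $R_{\tau}$ the right action of $\Circle$ by rotation described above. Because $F_{q}$ is right-invariant, and in particular $R_{\tau}$-invariant, its flow commutes with $R_{\tau}$; differentiating this commutation relation in $\tau$ at $\tau=0$ yields
\begin{equation*}
  T_{V_{0}}\Phi_{q}(t).(\varphi_{0,x},f_{0,x},0,v_{0,x},\sigma_{0,x},0) = (\varphi_{x}(t),f_{x}(t),0,v_{x}(t),\sigma_{x}(t),0).
\end{equation*}
The left-hand side is defined and continuous on all of $J_{q}$ (it is the derivative of the flow, which exists by Corollary~\ref{cor:local-existence-Hq}), hence so is the right-hand side; thus the $x$-derivative of the solution stays in $\HH{q-1}\oplus\HH{q-1}$ on compact subintervals of $J_{q}$, i.e. $V(t)\in T\G{q+1}$ whenever $V_{0}\in T\G{q+1}$ and $t\in J_{q}$. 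This gives $J_{q}\subseteq J_{q+1}$; running the same argument backwards in time gives the reverse inclusion, so $J_{q+1}=J_{q}$, and by iteration $J_{q}$ is independent of $q$.

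The one point requiring genuine care --- and the step I expect to be the main obstacle --- is justifying the differentiation in $\tau$ above: one must know that the flow $\Phi_{q}$ of the \emph{extended} spray $F_{q}$ is $C^{1}$ jointly in time and initial data (Theorem~\ref{thm:smoothness-spray} together with Corollary~\ref{cor:local-existence-Hq}) and that $\tau\mapsto R_{\tau}\cdot(g_{0},V_{0})$ is a $C^{1}$ curve into $T\G{q}$ when $(g_{0},V_{0})\in T\G{q+1}$, which is the content of the regularity facts recalled just before the statement and ultimately rests on Lemma~\ref{lem:composition_regularity}. Once this bootstrapping mechanism is in place, the remaining work --- intersecting the $H^{q}$-scales, promoting time regularity to $C^{\infty}$ through the equation, and transporting uniqueness and maximality down from the $H^{q}$ category --- is entirely routine, and I would simply refer to the proof of \cite[Theorem~1.1]{EK2014}, as the authors indicate, rather than reproduce it.
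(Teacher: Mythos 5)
Your proposal is correct and follows essentially the same route as the paper: the authors likewise only sketch the argument, invoking the rotation-invariance of the spray, differentiating the commutation relation of $\Phi_{q}$ with $TR_{\tau}$ to obtain exactly the identity $T_{V_{0}}\Phi_{q}(t).(\varphi_{0,x},f_{0,x},0,v_{0,x},\sigma_{0,x},0)=(\varphi_{x}(t),f_{x}(t),0,v_{x}(t),\sigma_{x}(t),0)$, deducing the $q$-independence of $J_{q}$ (with the time-reversal for the converse direction), and referring to \cite[Theorem 1.1]{EK2014} for the remaining details. The only cosmetic quibble is that the inclusion $J_{q+1}\subseteq J_{q}$ is automatic (an $H^{q+1}$ solution is an $H^{q}$ solution), so time-reversal is really used to propagate regularity backwards rather than to obtain that inclusion; this does not affect the validity of your argument.
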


\section{Global solutions}
\label{sec:global-solutions}

Given the fact that the Camassa-Holm equation is embedded in the system \eqref{eq:Main}, classical results, cf. \cite{Con2000,CE1998b,CE1998} make clear that not all solutions to \eqref{eq:Main} can exist globally. Global in time solutions to \eqref{eq:Main} in the case $a=2$, $\alpha=0$, $\kappa=1$ have been constructed in \cite{CI2008}.
In this section, we will not construct global solutions, but we provide a criterion for a strong solution to the full system \eqref{eq:Main} to be defined for all time. Most of the proofs are similar to the ones given in~\cite{EK2014a} and will only be sketched. Throughout this section, we work in the smooth category, and we let
\begin{equation*}
  (g,V) \in C^\infty(J,T\GS)
\end{equation*}
be the unique solution of the Cauchy problem~\eqref{eq:geodesic-equations}, emanating from
\begin{equation*}
  (g_{0},V_{0}) \in T\GS,
\end{equation*}
and defined on the \emph{maximal time interval} $J$. The corresponding solution $(u,\rho)$ of the Arnold-Euler equation~\eqref{eq:EHKL} is a path
\begin{equation*}
    (u,\rho) \in C^{\infty}(J,\CS\oplus\CS),
\end{equation*}
with $(u(0),\rho(0)) = (u_{0},\rho_{0})$. The \emph{momentum} $m(t):=Au(t)$ is defined as a path
\begin{equation*}
  m \in C^{\infty}(J,\CS).
\end{equation*}
We will now establish several \textit{a priori} estimates that will lead to a global existence result.

\begin{lemma}\label{lem:conservation-law}
We have
\begin{equation*}
  (\rho\circ\varphi)(t) \cdot \varphi_{x}^{a-1}(t) = (\rho\circ\varphi)(0) \cdot \varphi_{x}^{a-1}(0),
\end{equation*}
for all $t \in J$.
\end{lemma}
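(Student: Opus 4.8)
The plan is to verify the claimed identity by differentiating the left-hand side in $t$ and showing that the result vanishes, so that $(\rho\circ\varphi)(t)\cdot\varphi_x^{a-1}(t)$ is constant along the flow. Write $Q(t,x) := (\rho\circ\varphi)(t,x)\,\varphi_x^{a-1}(t,x)$. Recall from Section~\ref{sec:geometry} that $\varphi_t = v = u\circ\varphi$ and $f_t = \sigma = \rho\circ\varphi$, and that along a solution of~\eqref{eq:geodesic-equations} the $\rho$-component of~\eqref{eq:EHKL} holds, i.e. $\rho_t = -u\rho_x - (a-1)u_x\rho$. First I would compute $\partial_t(\rho\circ\varphi) = (\rho_t + u\rho_x)\circ\varphi$ using $\varphi_t = u\circ\varphi$ and the chain rule; substituting the $\rho$-equation this becomes $-(a-1)(u_x\rho)\circ\varphi = -(a-1)(u_x\circ\varphi)(\rho\circ\varphi)$.

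Next I would handle the Jacobian factor. Differentiating $\varphi_t = u\circ\varphi$ in $x$ gives $\partial_t\varphi_x = (u_x\circ\varphi)\,\varphi_x$, the standard evolution equation for the Lagrangian Jacobian. Hence $\partial_t(\varphi_x^{a-1}) = (a-1)\varphi_x^{a-2}\,\partial_t\varphi_x = (a-1)(u_x\circ\varphi)\,\varphi_x^{a-1}$. Combining the two computations via the product rule,
\begin{equation*}
  \partial_t Q = \big(\partial_t(\rho\circ\varphi)\big)\varphi_x^{a-1} + (\rho\circ\varphi)\,\partial_t(\varphi_x^{a-1}) = -(a-1)(u_x\circ\varphi)(\rho\circ\varphi)\varphi_x^{a-1} + (a-1)(u_x\circ\varphi)(\rho\circ\varphi)\varphi_x^{a-1} = 0.
\end{equation*}
Therefore $Q(t,x)$ is independent of $t$ for every $x$, which is precisely the asserted conservation law. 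Since we work in the smooth category on the maximal interval $J$, all these manipulations are legitimate: $\varphi(t)$ is an orientation-preserving diffeomorphism so $\varphi_x > 0$ and $\varphi_x^{a-1}$ is well-defined and smooth even for non-integer $a$, and $\rho\circ\varphi$, $u_x\circ\varphi$ are smooth in $(t,x)$.

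There is essentially no serious obstacle here; the only point requiring a word of care is that $a$ is a real parameter, so $\varphi_x^{a-1}$ must be interpreted via the positivity of $\varphi_x$, and the differentiation $\partial_t(\varphi_x^{a-1}) = (a-1)\varphi_x^{a-2}\partial_t\varphi_x$ is valid precisely because $\varphi_x$ stays positive. One could alternatively phrase the argument by setting $y(t) := \log\varphi_x(t)$, noting $y_t = u_x\circ\varphi$, and observing that $\partial_t\log Q = \partial_t\log(\rho\circ\varphi) + (a-1)y_t = -(a-1)(u_x\circ\varphi) + (a-1)(u_x\circ\varphi) = 0$ wherever $\rho\circ\varphi \neq 0$, then extending to the zero set by continuity; but the direct product-rule computation above avoids this case distinction and is the cleanest route.
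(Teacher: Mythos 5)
Your proof is correct and follows essentially the same route as the paper: differentiate $(\rho\circ\varphi)\,\varphi_x^{a-1}$ in $t$, use $\varphi_t=u\circ\varphi$ and the $\rho$-equation of~\eqref{eq:EHKL} to see that the derivative is $\left(\rho_t+u\rho_x+(a-1)u_x\rho\right)\circ\varphi\cdot\varphi_x^{a-1}=0$. You merely spell out the product-rule computation that the paper states in a single line, plus the (sensible) remark that $\varphi_x>0$ makes $\varphi_x^{a-1}$ well-defined for real $a$.
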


\begin{proof}
Let $\varphi$ be the flow of the time dependent vector field $u$. We have
\begin{equation*}
  \frac{d}{dt}\left(\rho\circ\varphi \cdot \varphi_{x}^{a-1}\right) = \left(\rho_{t} + \rho_{x}u + (a-1)\rho u_{x}\right)\circ \varphi \cdot \varphi_{x}^{a-1},
\end{equation*}
which vanishes because $(u,\rho)$ solves \eqref{eq:EHKL}.
\end{proof}

\begin{corollary}\label{cor:rho-positivity}
Suppose that $\rho_{0} > 0$. Then $\rho(t) > 0$ for all $t \in J$.
\end{corollary}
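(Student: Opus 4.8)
The plan is to derive this immediately from Lemma~\ref{lem:conservation-law}. That lemma asserts that the quantity $(\rho\circ\varphi)(t)\cdot\varphi_{x}^{a-1}(t)$ is constant in $t$, where $\varphi$ is the flow of the time-dependent vector field $u$. First I would recall that $\varphi(t)$ is, for each $t\in J$, an orientation-preserving diffeomorphism of $\Circle$ — indeed it lies in $\DiffS$ by Corollary~\ref{cor:well-posedness-smooth} — so that $\varphi_{x}(t)(x)>0$ for every $x\in\Circle$. Consequently $\varphi_{x}^{a-1}(t)(x)>0$ pointwise, for any real exponent $a-1$ (here we only need $a\neq 1$, and the power is well-defined since the base is positive).

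Now fix $t\in J$ and $x\in\Circle$. Evaluating the conserved quantity at $x$ and at time $0$ gives
\begin{equation*}
  \rho(t)(\varphi(t)(x))\cdot\varphi_{x}(t)(x)^{a-1} = \rho_{0}(\varphi(0)(x))\cdot\varphi_{x}(0)(x)^{a-1}.
\end{equation*}
Since $\varphi(0)=\id$, the right-hand side equals $\rho_{0}(x)\cdot 1 = \rho_{0}(x)>0$ by hypothesis. The factor $\varphi_{x}(t)(x)^{a-1}$ on the left is strictly positive, so we conclude $\rho(t)(\varphi(t)(x))>0$. Finally, as $x$ ranges over $\Circle$ the point $\varphi(t)(x)$ ranges over all of $\Circle$, because $\varphi(t)$ is a bijection of $\Circle$; hence $\rho(t)(y)>0$ for every $y\in\Circle$, i.e. $\rho(t)>0$. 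Since $t\in J$ was arbitrary, this holds for all $t\in J$.

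There is essentially no obstacle here: the only point requiring a moment's care is the surjectivity of $\varphi(t)$, which is what lets us pass from positivity of $\rho(t)$ at the points $\varphi(t)(x)$ to positivity of $\rho(t)$ everywhere, and the positivity of $\varphi_{x}$, which is guaranteed by $\varphi(t)$ being an orientation-preserving diffeomorphism. Both are standard consequences of the well-posedness theory developed in Section~\ref{sec:well-posedness}. One could phrase the whole argument slightly more invariantly by noting that $\rho(t) = \big((\rho_{0}/\varphi_{x}^{a-1})\circ\varphi^{-1}\big)(t)\cdot$(something positive), but the pointwise evaluation above is the cleanest route.
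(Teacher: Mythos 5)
Your proposal is correct and follows essentially the same route as the paper: both deduce from Lemma~\ref{lem:conservation-law} that $(\rho\circ\varphi)(t)=(\rho\circ\varphi)(0)\,\varphi_{x}^{a-1}(0)/\varphi_{x}^{a-1}(t)>0$ and then use that $\varphi(t)$ is an orientation-preserving diffeomorphism (hence $\varphi_{x}>0$ and $\varphi(t)$ is surjective) to conclude. The paper's version is just a more compressed statement of the same argument, so no further comment is needed.
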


\begin{proof}
Due to lemma~\ref{lem:conservation-law}, we have
\begin{equation*}
  (\rho\circ\varphi)(t) = (\rho\circ\varphi)(0)\frac{\varphi_{x}^{a-1}(0)}{\varphi_{x}^{a-1}(t)} > 0,
\end{equation*}
which achieves the proof because $\varphi$ is an orientation-preserving diffeomorphism.
\end{proof}

\begin{corollary}\label{cor:rho-bound}
Suppose that $\norm{u_{x}(t)}_{\infty}$ is bounded on any bounded subinterval of $J$. Then $\norm{\rho(t)}_{\infty}$ is bounded on any bounded subinterval of $J$.
\end{corollary}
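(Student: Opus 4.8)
The plan is to track the evolution of $\rho$ along the flow $\varphi$ of $u$, exactly as in Lemma~\ref{lem:conservation-law}. From that lemma we have the pointwise identity
\[
  (\rho\circ\varphi)(t,x) = (\rho\circ\varphi)(0,x)\,\frac{\varphi_{x}^{a-1}(0,x)}{\varphi_{x}^{a-1}(t,x)},
\]
so controlling $\norm{\rho(t)}_{\infty}$ reduces to controlling $\varphi_{x}(t,x)$ from above and away from zero, uniformly in $x$, on any bounded subinterval $[0,T]\subset J$ (and similarly for negative times). This is where the hypothesis on $\norm{u_{x}(t)}_{\infty}$ enters.

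First I would recall that $\varphi$ satisfies $\varphi_{t} = u\circ\varphi$, hence $\varphi_{x}$ satisfies the linear (in $\varphi_{x}$) ODE
\[
  \partial_{t}\varphi_{x}(t,x) = u_{x}(t,\varphi(t,x))\,\varphi_{x}(t,x),
\]
whose solution is
\[
  \varphi_{x}(t,x) = \varphi_{x}(0,x)\,\exp\!\left(\int_{0}^{t} u_{x}(s,\varphi(s,x))\,ds\right).
\]
Let $M := \sup_{s\in[0,T]}\norm{u_{x}(s)}_{\infty} < \infty$ by assumption. Then for all $x$ and all $t\in[0,T]$,
\[
  e^{-MT}\,\varphi_{x}(0,x) \le \varphi_{x}(t,x) \le e^{MT}\,\varphi_{x}(0,x),
\]
and since $\varphi(0)=\id$ gives $\varphi_{x}(0,x)\equiv 1$, we obtain $e^{-MT} \le \varphi_{x}(t,x) \le e^{MT}$ uniformly. (If one prefers not to assume $g_{0}=(\id,0,0)$, the same bound holds with an extra factor $\norm{\varphi_{x}(0)}_{\infty}$ and $\norm{1/\varphi_{x}(0)}_{\infty}$, both finite since $\varphi(0)$ is a fixed smooth diffeomorphism.)

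Plugging these two-sided bounds on $\varphi_{x}(t,\cdot)$ into the displayed identity for $\rho\circ\varphi$ yields
\[
  \norm{(\rho\circ\varphi)(t)}_{\infty} \le e^{\abs{a-1}MT}\,\norm{\rho_{0}}_{\infty},
\]
and because $\varphi(t,\cdot)$ is a bijection of $\Circle$, $\norm{\rho(t)}_{\infty} = \norm{(\rho\circ\varphi)(t)}_{\infty}$, which gives the claimed bound on $[0,T]$. Running the same argument backward in time handles the negative part of any bounded subinterval of $J$. I do not anticipate a genuine obstacle here; the only point requiring a little care is ensuring the constant $M$ is finite, which is precisely the standing hypothesis, and keeping track of the (harmless) contribution of $\varphi_{x}(0)$ when the initial diffeomorphism is not the identity. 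The result is then immediate by composing the elementary Grönwall-type estimate for $\varphi_{x}$ with the conservation law of Lemma~\ref{lem:conservation-law}.
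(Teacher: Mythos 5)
Your proof is correct and follows essentially the same route as the paper: both arguments combine the conservation law of Lemma~\ref{lem:conservation-law} with a bound on $\varphi_{x}$ obtained from $\partial_{t}\varphi_{x}=(u_{x}\circ\varphi)\,\varphi_{x}$ and the hypothesis on $\norm{u_{x}(t)}_{\infty}$; you integrate this linear ODE explicitly where the paper applies Gronwall's lemma to $1/\varphi_{x}$. A minor bonus of your version is that the two-sided bound $e^{-MT}\le\varphi_{x}\le e^{MT}$ explicitly covers both signs of $a-1$, whereas the paper only records the bound on $\norm{1/\varphi_{x}(t)}_{\infty}$.
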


\begin{proof}
Let $\varphi$ be the flow of the time dependent vector field $u$ and set
\begin{equation*}
  \alpha(t) = \max_{x\in\Circle} 1/ \varphi_{x}(t).
\end{equation*}
Note that $\alpha$ is a continuous function. Let $I$ denote any bounded subinterval of $J$, and set
\begin{equation*}
  K = \sup_{t \in I} \norm{u_{x}(t)}_{\infty}.
\end{equation*}
From equation $\varphi_{t} = u \circ \varphi$, we deduce that
\begin{equation*}
   \left( 1/\varphi_{x} \right)_{t}= -(u_{x}\circ \varphi) / \varphi_{x},
\end{equation*}
and therefore, we have
\begin{equation*}
  \alpha(t)  \le \alpha(0) + K \int_{0}^{t}\alpha(s)\, ds.
\end{equation*}
Thus $\norm{1/\varphi_{x}(t)}_{\infty}$ is bounded on $I$ due to Gronwall's lemma and the conclusion follows from lemma~\ref{lem:conservation-law}.
\end{proof}

\begin{lemma}\label{lem:Hk-apriori-estimate}
Suppose that $\norm{u_{x}(t)}_{\infty}$ is bounded on any bounded subinterval of $J$. Then
\begin{equation*}
  \norm{m(t)}_{H^{k}}^{2} + \norm{\rho(t)}_{H^{k+1}}^{2}
\end{equation*}
is bounded on any bounded subinterval of $J$, for all non-negative integer $k$.
\end{lemma}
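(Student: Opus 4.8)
The plan is a standard higher-order energy estimate carried out by induction on $k$. Fix a bounded subinterval $I \subset J$. By hypothesis $K := \sup_{t\in I}\norm{u_{x}(t)}_{\infty} < \infty$, and by Corollary~\ref{cor:rho-bound} also $L := \sup_{t\in I}\norm{\rho(t)}_{\infty} < \infty$. Two elementary facts will be used throughout: since $u = A^{-1}m$ and $A^{-1}$ gains two derivatives, $\norm{u}_{H^{s+2}} \lesssim \norm{m}_{H^{s}}$ for every $s\ge 0$; and $u_{xx} = u - m$. Writing $E_{k}(t) := \norm{m(t)}_{H^{k}}^{2} + \norm{\rho(t)}_{H^{k+1}}^{2}$, the goal is to show that $E_{k}$ is bounded on $I$.

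For the base case $k=0$, I would pair the $m$-equation of~\eqref{eq:Main} with $m$, and the $\rho$-equation (and its $x$-derivative) with $\rho$ and $\rho_{x}$, in $L^{2}(\Circle)$. After integrating by parts in the transport terms (e.g.\ $\int u m_{x}m\,dx = -\tfrac12\int u_{x}m^{2}\,dx$) and using Cauchy--Schwarz, every resulting term is bounded by $C(E_{0}+1)$ with $C$ depending only on $K$ and $L$; the only ones requiring a moment's thought are $\int u_{xx}\rho\rho_{x}\,dx$, estimated via $\norm{u_{xx}}_{L^{2}} = \norm{u-m}_{L^{2}} \lesssim \norm{m}_{L^{2}}$ together with $\norm{\rho}_{\infty}=L$, and $\int \rho\rho_{x}m\,dx$, estimated via $\norm{\rho}_{\infty}=L$. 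This yields $\tfrac{d}{dt}E_{0}(t) \le C(t)\bigl(E_{0}(t)+1\bigr)$ with $C$ bounded on $I$, and Gronwall's lemma gives the bound on $E_{0}$.

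For the inductive step, assume $E_{k-1}$ is bounded on $I$. Since $E_{k} \simeq E_{k-1} + \norm{D^{k}m}_{L^{2}}^{2} + \norm{D^{k+1}\rho}_{L^{2}}^{2}$, it suffices to bound the last two quantities, and meanwhile $\norm{m}_{H^{k-1}}$, $\norm{\rho}_{H^{k}}$ and $\norm{u}_{H^{k+1}}$ are at our disposal as bounded on $I$. I would apply $D^{k}$ to the $m$-equation and $D^{k+1}$ to the $\rho$-equation, then pair the results with $D^{k}m$, resp.\ $D^{k+1}\rho$, in $L^{2}(\Circle)$. The Leibniz rule (equivalently, a Kato--Ponce commutator estimate) isolates from the transport terms the top-order contributions $(k-\tfrac12)\int u_{x}(D^{k}m)^{2}\,dx$ and $(k+\tfrac12)\int u_{x}(D^{k+1}\rho)^{2}\,dx$, both $\le (k+\tfrac12)K\,E_{k}$; every other (subprincipal) commutator term carries a factor of the form $D^{j-2}(u-m)$ or $D^{i}m$, $D^{i}\rho$ with $i\le k-1$, and is controlled by the inductive hypothesis, the single worst factor appearing being $\norm{m}_{H^{k}}\le\sqrt{E_{k}}$. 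For the nonlinear right-hand sides the Moser product estimate gives $\norm{D^{k}(\rho\rho_{x})}_{L^{2}} = \tfrac12\norm{D^{k+1}(\rho^{2})}_{L^{2}} \lesssim L\,\norm{\rho}_{H^{k+1}}$, $\norm{D^{k}(u_{x}m)}_{L^{2}} \lesssim K\norm{m}_{H^{k}} + \norm{u}_{H^{k+1}}\norm{m}_{\infty}$, and $\norm{D^{k+1}(u_{x}\rho)}_{L^{2}} \lesssim K\norm{\rho}_{H^{k+1}} + \norm{m}_{H^{k}}\,L$, each of which, once paired against $D^{k}m$ or $D^{k+1}\rho$, is $\le C(t)(E_{k}(t)+1)$ with $C$ bounded on $I$ (for the stray $\norm{m}_{\infty}$ use $\norm{m}_{\infty}\lesssim\norm{m}_{H^{1}}\lesssim 1+E_{k}$). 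Adding the two energy identities gives $\tfrac{d}{dt}\bigl(\norm{D^{k}m}_{L^{2}}^{2}+\norm{D^{k+1}\rho}_{L^{2}}^{2}\bigr) \le C(t)\bigl(E_{k}(t)+1\bigr)$ with $C$ bounded on $I$, and since $E_{k}\simeq E_{k-1}+\norm{D^{k}m}_{L^{2}}^{2}+\norm{D^{k+1}\rho}_{L^{2}}^{2}$ with $E_{k-1}$ already bounded, Gronwall's lemma closes the induction.

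The argument is essentially bookkeeping, and that is also where the difficulty lies: in each energy identity one must carefully separate the handful of genuinely top-order terms — which must be absorbed into $E_{k}$ with a coefficient controlled purely by $K$ and $L$, using the divergence structure of the transport terms and Corollary~\ref{cor:rho-bound} for the $\rho\rho_{x}$ term — from all other terms, which the inductive hypothesis renders harmless. The low indices $k=0$ and $k=1$, where $H^{k}(\Circle)$ does not embed in $C^{1}(\Circle)$, require a couple of \emph{ad hoc} integrations by parts (for instance rewriting $\int \rho_{x}^{2}m_{x}\,dx = -2\int \rho_{x}\rho_{xx}m\,dx$ and using $\norm{m}_{\infty}\lesssim\norm{m}_{H^{1}}$) in place of a crude Sobolev bound. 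With this organisation the proof is routine and parallels the one in~\cite{EK2014a}.
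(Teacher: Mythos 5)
Your proposal is correct and follows essentially the same route as the paper: an induction on $k$ via higher-order energy estimates, with the transport terms handled by integration by parts, the remaining commutator/product terms controlled through Leibniz-type (Moser/Kato--Ponce) estimates together with $\norm{u}_{H^{s+2}}\lesssim\norm{m}_{H^{s}}$ and the bounds on $\norm{u_{x}}_{\infty}$ and $\norm{\rho}_{\infty}$, and Gronwall's lemma closing each step. The paper differs only cosmetically, treating $k=0$ and $k=1$ by explicit computation and then running the induction with an elementary Leibniz-formula inequality rather than invoking Kato--Ponce by name.
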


\begin{proof}
We have
\begin{multline*}
  \frac{d}{dt}\left(\norm{m}_{L^{2}}^{2} +
\norm{\rho}_{H^{1}}^{2}\right) = \alpha \int u_{x}m - a \int u_{x}m^{2}
- \int umm_{x}
  \\
  - \kappa\int\rho\rho_{x}m - \int u\rho\rho_{x} - (a-1)\int
u_{x}\rho^{2} - \int u_{x}\rho_{x}^{2}
  \\
  - \int u\rho_{x}\rho_{xx} - (a-1)\int u_{xx}\rho\rho_{x} - (a-1)\int
u_{x}\rho_{x}^{2}.
\end{multline*}
After integrating by parts the terms
\begin{equation*}
  - \int umm_{x} = \frac{1}{2}\int u_{x}m^{2}, \quad \text{and} \quad -
\int u\rho_{x}\rho_{xx} = \frac{1}{2} \int u_{x}\rho_{x}^{2},
\end{equation*}
using the Cauchy--Schwarz inequality, and the fact that
$\norm{u}_{H^{2}} \lesssim \norm{m}_{L^{2}}$, we obtain
\begin{equation*}
  \frac{d}{dt}\left(\norm{m}_{L^{2}}^{2} +
\norm{\rho}_{H^{1}}^{2}\right) \lesssim \left(1 + \norm{u_{x}}_{\infty} +
\norm{\rho}_{\infty}\right) \left(\norm{m}_{L^{2}}^{2} +
\norm{\rho}_{H^{1}}^{2}\right)
\end{equation*}
and the conclusion holds from corollary~\ref{cor:rho-bound} and by
Gronwall's lemma. Suppose now that $k \ge 1$. We will show by induction on $k$ that
\begin{multline}\label{eq:Gronwall-estimate}
  \frac{d}{dt}\left(\norm{m}_{H^{k}}^{2} + \norm{\rho}_{H^{k+1}}^{2}\right) \lesssim
  \\
  \left(1 + \norm{m}_{H^{k-1}} + \norm{\rho}_{H^{k}}\right) \left(\norm{m}_{H^{k}}^{2} + \norm{\rho}_{H^{k+1}}^{2}\right),
\end{multline}
which will achieve the proof, using Gronwall's lemma. For $k=1$, we
only need to estimate the term
\begin{equation*}
  \int m_{x}m_{tx} + \int \rho_{xx}\rho_{txx},
\end{equation*}
because
\begin{equation*}
  \norm{u_{x}}_{\infty} + \norm{\rho}_{\infty} \lesssim
\norm{m}_{L^{2}} + \norm{\rho}_{H^{1}}.
\end{equation*}
We have first
\begin{multline*}
  \int m_{x}m_{tx} =  \alpha \int u_{xx}m_{x} - a\int u_{xx}mm_{x} -
a\int u_{x}m_{x}^{2}
  \\
  - \int u_{x}m_{x}^{2} - \int um_{xx}m_{x} - \kappa \int
\rho_{x}^{2}m_{x} - \kappa \int \rho \rho_{xx}m_{x}.
\end{multline*}
Integrating by parts the term
\begin{equation*}
  - \int um_{xx}m_{x} = \frac{1}{2}\int u_{x}m_{x}^{2},
\end{equation*}
using the Cauchy--Schwarz inequality and the fact that $\HH{1}$ is
a multiplicative algebra, we get
\begin{multline*}
  \int m_{x}m_{tx} \lesssim \norm{u}_{H^{2}}\norm{m}_{H^{1}} +
\norm{u}_{H^{2}}\norm{m}_{H^{1}}^{2} +
\norm{u_{x}}_{\infty}\norm{m}_{H^{1}}^{2}
  \\
  + \norm{\rho_{x}^{2}}_{L^{2}}\norm{m}_{H^{1}} +
\norm{\rho}_{\infty}\norm{\rho}_{H^{2}}\norm{m}_{H^{1}}.
\end{multline*}
Now, using the following estimates
\begin{equation*}
  \norm{u}_{H^{2}} \lesssim \norm{m}_{L^{2}}, \quad \norm{u_{x}}_{\infty} \lesssim \norm{m}_{L^{2}}, \quad \norm{\rho}_{\infty} \lesssim \norm{\rho}_{H^{1}},
\end{equation*}
together with
\begin{equation*}
  \norm{\rho_{x}^{2}}_{L^{2}} \lesssim \norm{\rho}_{H^{1}}\norm{\rho}_{H^{2}},
\end{equation*}
we obtain
\begin{equation*}
  \int m_{x}m_{tx} \lesssim \left(1 + \norm{m}_{L^{2}} +
\norm{\rho}_{H^{1}}\right) \left(\norm{m}_{H^{1}}^{2} +
\norm{\rho}_{H^{2}}^{2}\right).
\end{equation*}
We have next
\begin{multline*}
  \int \rho_{xx}\rho_{txx} =  - \int u_{xx}\rho_{x}\rho_{xx} - 2 \int
u_{x}\rho_{xx}^{2} - \int u\rho_{xxx}\rho_{xx}
  \\
  - (a-1)\int u_{xxx}\rho\rho_{xx} - 2(a-1)\int u_{xx}\rho_{x}\rho_{xx}
- (a-1)\int u_{x}\rho_{xx}^{2},
\end{multline*}
from which we deduce
\begin{multline*}
  \int \rho_{xx}\rho_{txx} \lesssim
\norm{u}_{H^{2}}\norm{\rho_{x}^{2}}_{H^{1}} +
\norm{u_{x}}_{\infty}\norm{\rho}_{H^{2}}^{2} +
\norm{u_{x}}_{\infty}\norm{\rho}_{H^{2}}^{2}
  \\
  + \norm{\rho}_{\infty}\norm{u}_{H^{3}}\norm{\rho}_{H^{2}} +
\norm{u}_{H^{2}}\norm{\rho_{x}^{2}}_{H^{1}} +
\norm{u_{x}}_{\infty}\norm{\rho}_{H^{2}}^{2}.
\end{multline*}
Using the fact that $\HH{1}$ is a multiplicative algebra and the following estimates
\begin{equation*}
  \norm{u}_{H^{2}} \lesssim \norm{m}_{L^{2}}, \quad \norm{u_{x}}_{\infty} \lesssim \norm{m}_{L^{2}}, \quad \norm{u}_{H^{3}} \lesssim \norm{m}_{H^{1}}, \quad \norm{\rho}_{\infty} \lesssim \norm{\rho}_{H^{1}},
\end{equation*}
we thus obtain
\begin{equation*}
  \int \rho_{xx}\rho_{txx} \lesssim \left(\norm{m}_{L^{2}} +
\norm{\rho}_{H^{1}}\right) \left(\norm{m}_{H^{1}}^{2} +
\norm{\rho}_{H^{2}}^{2}\right),
\end{equation*}
which achieves the proof for $k=1$, using again Gronwall's lemma and the
first \textit{a priori} estimate obtained on $\norm{m}_{L^{2}} + \norm{\rho}_{H^{1}}$.
Assuming now that ~\eqref{eq:Gronwall-estimate} is true for $k \ge 1$,
we will show that the same is true for $k+1$. We have
\begin{multline*}
  \frac{d}{dt}\left(\norm{m}_{H^{k+1}}^{2} +
\norm{\rho}_{H^{k+2}}^{2}\right) =
\frac{d}{dt}\left(\norm{m}_{H^{k}}^{2} + \norm{\rho}_{H^{k+1}}^{2}\right)
  \\
  + 2 \int m_{t}^{(k+1)}m^{(k+1)} + 2 \int \rho_{t}^{(k+2)}\rho^{(k+2)}.
\end{multline*}
Thus, we need to estimate
\begin{multline}\label{eq:k+1-estimate}
  \alpha \int (u_{x})^{(k+1)}m^{(k+1)} - a\int (u_{x}m)^{(k+1)}m^{(k+1)}
- \int (um_{x})m^{(k+1)}
  \\
   - \kappa \int (\rho\rho_{x})^{(k+1)}m^{(k+1)} - \int
(u\rho_{x})^{(k+2)}\rho^{(k+2)} - (a-1)\int (u_{x}\rho)^{(k+2)}\rho^{(k+2)}.
\end{multline}
Using the following estimate derived from the Leibnitz formula:
\begin{multline*}
  \abs{\int (fg)^{(n+1)}h^{(n+1)}} \lesssim
\norm{f}_{H^{n}}\norm{g}_{H^{n+1}}\norm{h}_{H^{n+1}}
  \\
  + \abs{\int fg^{(n+1)}h^{(n+1)}} + \abs{\int f^{(n+1)}gh^{(n+1)}},
\end{multline*}
where $f,g,h \in \CS$ and $n \ge 1$, we deduce, using some integration
by parts, that each term in~\eqref{eq:k+1-estimate} is bounded, up to a
multiplicative constant by
\begin{equation*}
  \left(1 + \norm{m}_{H^{k}} + \norm{\rho}_{H^{k+1}}\right)
\left(\norm{m}_{H^{k+1}}^{2} + \norm{\rho}_{H^{k+2}}^{2}\right),
\end{equation*}
which achieves the proof.
\end{proof}

In~\cite{EK2014a} the following distance on $\D{q}$ was introduced,
\begin{equation*}
  d_{q}(\varphi_{1},\varphi_{2}) := d_{C^{0}}(\varphi_{1},\varphi_{2}) + \norm{\varphi_{1x} - \varphi_{2x}}_{H^{q-1}} + \norm{1/\varphi_{1x} - 1/\varphi_{2x}}_{\infty},
\end{equation*}
which makes $(\D{q},d_{q})$ a complete metric space. Following~\cite[Section~5]{EK2014a}, we can check that the spray $F_{q}$ defined in Theorem~\ref{thm:smoothness-spray} is bounded on bounded sets of $TH^{q}G$ (since the proof is similar, we redirect to this reference for the details). As a consequence of the \textit{a priori} estimates obtained in Corollary~\ref{cor:rho-bound} and Lemma~\ref{lem:Hk-apriori-estimate}, we obtain therefore the following theorem.

\begin{theorem}\label{thm:global-existence}
Let $(u_{0},\rho_{0}) \in \CS\times\CS$ and
\begin{equation*}
    (u,\rho) \in C^{\infty}(J,\CS\oplus\CS),
\end{equation*}
be the solution of~\eqref{eq:EHKL} with $(u(0), \rho(0)) = (u_{0},\rho_{0})$. Suppose that $\norm{u_{x}(t)}_{\infty}$ is bounded on any bounded
subinterval of $J$. Then, the solution $(u(t),\rho(t))$ is defined for $t \in \RR$.
\end{theorem}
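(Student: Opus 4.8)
The plan is to show that the blow-up of a smooth solution can only occur through the unboundedness of $\norm{u_{x}(t)}_{\infty}$, by combining the local well-posedness from Corollary~\ref{cor:well-posedness-smooth} (or rather its $H^{q}$-version, Corollary~\ref{cor:well-posedness-Hq}) with the \emph{a priori} estimates already established in this section. Concretely, I would argue by contradiction: suppose $J = (T_{-}, T_{+})$ with, say, $T_{+} < \infty$, while $\norm{u_{x}(t)}_{\infty}$ stays bounded on $[0, T_{+})$. I want to derive that $(u(t), \rho(t))$ stays bounded in every $\HH{q} \oplus \HH{q-1}$ as $t \uparrow T_{+}$, which by the continuation criterion contradicts the maximality of $J$.

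The first step is to translate the $H^{k}$ bounds of Lemma~\ref{lem:Hk-apriori-estimate} into bounds on $(u, \rho)$ in the Sobolev scale: since $m = Au$ with $A = 1 - D^{2}$ an isomorphism $\HH{k+2} \to \HH{k}$, boundedness of $\norm{m(t)}_{H^{k}}^{2} + \norm{\rho(t)}_{H^{k+1}}^{2}$ on $[0, T_{+})$ gives boundedness of $\norm{u(t)}_{H^{k+2}} + \norm{\rho(t)}_{H^{k+1}}$ there, for every non-negative integer $k$; interpolating (or just using integer indices above $q$) covers all real $q > 5/2$. The second step is to control the Lagrangian flow data: by Corollary~\ref{cor:rho-bound}'s proof, $\norm{1/\varphi_{x}(t)}_{\infty}$ is bounded on $[0, T_{+})$ via Gronwall applied to $(1/\varphi_{x})_{t} = -(u_{x} \circ \varphi)/\varphi_{x}$, and similarly $\norm{\varphi_{x}(t)}_{\infty}$ is bounded (same ODE run the other way, or directly from $\varphi_{xt} = (u_{x}\circ\varphi)\varphi_{x}$); integrating $\varphi_{t} = u \circ \varphi$ and using $\norm{u(t)}_{\infty} \lesssim \norm{u(t)}_{H^{2}} \lesssim \norm{m(t)}_{L^{2}}$ gives that $\varphi(t)$ stays in a bounded set of $\D{q}$ (using the metric $d_{q}$ from \cite{EK2014a}) and likewise $\varphi_{x}(t) \in \HH{q-1}$ stays bounded, from $\varphi_{xt} = (u_{x}\circ\varphi)\varphi_{x}$ with the algebra property and the composition estimates of \cite[Lemma~B.2]{EK2014}. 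Thus the full geodesic $(g(t), V(t)) = (\varphi, f, s, v, \sigma, \alpha)(t)$ in $T\G{q}$ — where $v = u \circ \varphi$, $\sigma = \rho \circ \varphi$ — stays in a bounded set of $T\G{q}$ as $t \uparrow T_{+}$.

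The third step invokes the key structural fact, stated just before the theorem: the spray $F_{q}$ of Theorem~\ref{thm:smoothness-spray} is bounded on bounded sets of $T\G{q}$, exactly as in \cite[Section~5]{EK2014a}. For an ODE $\dot{z} = F_{q}(z)$ on a Banach (Hilbert) manifold whose right-hand side is bounded on bounded sets, a solution that remains in a bounded set on $[0, T_{+})$ has $\dot{z}$ bounded there, hence $z(t)$ is uniformly Lipschitz and therefore Cauchy as $t \uparrow T_{+}$, so it extends continuously to $T_{+}$; feeding this limit point back into the local existence Corollary~\ref{cor:local-existence-Hq} extends the solution past $T_{+}$, contradicting maximality. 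The same argument at $T_{-}$ handles the left endpoint, so $J = \RR$, and then $(u, \rho) \in C^{\infty}(\RR, \CS \oplus \CS)$ by Corollary~\ref{cor:well-posedness-smooth} (or the $q$-independence of $J_{q}$ noted after Corollary~\ref{cor:well-posedness-Hq}).

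The main obstacle is the second step: one must be careful that the \emph{a priori} $H^{k}$ bounds on $(u, \rho)$ in Eulerian coordinates genuinely transfer to bounds on the Lagrangian objects $(\varphi, \varphi_{x}, 1/\varphi_{x}, v, \sigma)$ in the correct function spaces, with the slight loss of derivative in composition (Lemma~\ref{lem:composition_regularity}) absorbed by the fact that Lemma~\ref{lem:Hk-apriori-estimate} gives bounds at \emph{every} level $k$, not just one. Everything else — the Gronwall arguments, the algebra estimates, the abstract ODE continuation — is routine and parallels \cite{EK2014a}, which is why the paper only sketches it.
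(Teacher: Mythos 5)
Your proposal is correct and follows essentially the same route as the paper, which only sketches the argument by citing \cite[Section~5]{EK2014a}: the \textit{a priori} bounds of Corollary~\ref{cor:rho-bound} and Lemma~\ref{lem:Hk-apriori-estimate} keep the geodesic in a bounded subset of $T\G{q}$, the spray $F_{q}$ is bounded on bounded sets, and completeness of $(\D{q},d_{q})$ then lets one extend the solution past any finite endpoint of $J$, contradicting maximality. Your explicit treatment of the Eulerian-to-Lagrangian transfer (bounding $\varphi$, $\varphi_{x}$, $1/\varphi_{x}$, $v$, $\sigma$) is precisely the detail the paper delegates to the reference, and you correctly identify that the derivative loss under composition is absorbed because the estimates hold at every level $k$.
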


\begin{remark}
Note that theorem~\ref{thm:global-existence} is still true for a solution
\begin{equation*}
    (u,\rho) \in C^{0}(J_{q},\HH{q}\oplus\HH{q-1}) \cap C^{1}(J_{q},\HH{q-1}\oplus\HH{q-2}),
\end{equation*}
for $q$ big enough but the proof requires the use of a Friedrich's mollifier and Kato-Ponce estimate (see~\cite[Theorem 5.1]{EK2014a}), to establish the \textit{a priori} estimate given in lemma~\ref{lem:Hk-apriori-estimate}.
\end{remark}

\begin{remark}
It is well-known that the Camassa-Holm equation (CH) has solutions which blow-up in finite time, cf. \cite{CE1998b}. One way to exhibit such solutions uses the fact that a solution which is spatially odd initially will remain spatially odd at each time. This is a consequence of the following symmetry of the equation
\begin{equation*}
  u(x) \mapsto -u(-x).
\end{equation*}
It could be interesting to note that equation~\eqref{eq:EHKL} is invariant under the transformation
\begin{equation*}
  u(x) \mapsto -u(-x), \quad \rho(x) \mapsto -\rho(-x), \quad \alpha \mapsto -\alpha,
\end{equation*}
but unfortunately not by the transformation
\begin{equation*}
  u(x) \mapsto -u(-x), \quad \rho(x) \mapsto -\rho(-x),
\end{equation*}
unless $\alpha = 0$. In particular, there is no reason, contrary to (CH), that a solution $(u,\rho)$ of~\eqref{eq:EHKL}, which is spatially odd for $t=0$ will remain odd at time $t$. There are classes other than odd initial data which lead to a finite time blow-up for the Camassa-Holm equation. The specification of compatible initial conditions for system \eqref{eq:Main} with $\alpha\ne 0$ remains open.
\end{remark}


\section*{Acknowledgements}

D. Henry, B. Kolev and T. Lyons were supported by the Irish Research Council--Campus France PHC ``Ulysses'' programme. All the authors would like to thank Rossen Ivanov for his stimulating discussions, and Cathy and Michel for their kind hospitality at \emph{Les Grandes Moli\`{e}res} during the preparation of this work. The authors would like to thank the anonymous referee for helpful suggestions and comments.


\end{document}